\documentclass[11pt]{article}

\usepackage[margin=1in]{geometry}
\usepackage{times}
\usepackage{microtype}
\usepackage{amsmath,amssymb,amsthm}
\usepackage{thmtools,thm-restate}
\usepackage{graphicx}
\usepackage{subcaption}
\usepackage{booktabs}
\usepackage{hyperref}
\usepackage{cleveref}
\usepackage[numbers]{natbib}
\usepackage{xcolor}
\usepackage{algorithm}
\usepackage{algpseudocode}
\usepackage{parskip}

\newtheorem{theorem}{Theorem}
\newtheorem{lemma}[theorem]{Lemma}

\title{Permutation-Based Stegomalware in Large Language Models: Threats and Countermeasures}
\author{
  Danny Wood \\
  Fuzzy Labs \\
  \texttt{danny@fuzzylabs.ai}
  \and
  James Stringer \\
  Fuzzy Labs \\
  \texttt{james@fuzzylabs.ai}
}
\date{}

\begin{document}

\maketitle

\begin{abstract}

  The difficulty of training large language models (LLMs), together with their ubiquity, raises
  the threat of stegomalware, where malicious payloads are embedded into model weights.
  Recent work has demonstrated the use of permutation symmetry in model weights to
  mitigate these threats, but failed to show neutralization of stegomalware across
  all weights for LLMs.
  In this paper, we demonstrate the full potential of behavior-preserving symmetries
  as a defense against stegomalware, as well as the risks these symmetries pose when
  exploited by attackers.

  For stegomalware neutralization, we improve upon previous work, demonstrating that it is possible
  to select permutations which displace all model parameters. This contrasts with
  previous methods which left a significant percentage of weights
  unaltered in LLMs.
  When used in an attack, we show that permutation symmetries can encode malware into
  the weights of a model in a way that is theoretically lossless, requires no
  retraining after encoding, and needs no payload-specific information in the extraction
  script---a combination of characteristics not previously seen in any single method.

  While theoretically lossless, permutation can in practice alter model
  behavior due to the accumulation of numerical error. We therefore quantify the loss in
  model performance associated with applying these methods, for both attack and defense,
  showing it to be minimal.
\end{abstract}

\section{Introduction}
\label{sec:intro}

Large language models are an increasingly vital part of many software products, but
for most academic, personal, or even industrial uses, training a custom large language model
from scratch is infeasible. This means that most developers and end users rely on pre-trained
models from online repositories such as Hugging Face. This therefore creates a supply
chain risk, as malicious actors can use these models as an attack vector.

Attacks on the supply chain for machine learning tools and software are becoming
increasingly prevalent, both through conventional means such as compromising trusted
packages~\citep{dholakia_security_2026} and more AI-specific means such as serving compromised
models on Hugging Face.
This kind of supply chain attack was listed as one of OWASP's top ten risks for LLMs and
GenAI Apps~\citep{owasp_owasp_2025} and is likely to increase in prevalence as the demand
for AI in products grows.

Real world attacks using compromised models thus far typically hide malicious payloads within
code and data separate from the model weights themselves.
However, recent research has demonstrated that it is possible for attackers to
make their malicious code harder to detect through the use of steganography,
creating malware which is hidden within the structures of the neural network
itself~\citep{liu_stegonet_2020, wang_evilmodel_2022, hitaj_maleficnet_2022}.
Existing work has shown that neural networks
make appealing targets for this kind of malware, since the high entropy and high fault
tolerance of neural network weights allow for the insertion of significantly sized payloads
with little drop in model performance.
Steganographic malware has already proven both feasible and capable of evading
detection on other online software hubs, such as smartphone app
stores~\citep{lin_stegomalware_2015}, suggesting a need to understand
and address these vulnerabilities in this new domain.

Recent work has begun to develop defenses against these attacks by exploiting functional
invariances in neural network architectures, shuffling weights to disrupt payload
extraction in specific constructions~\citep{torpmann-hagen_defending_2025, gilkarov_neuperm_2025}.
This kind of functional invariance has also been used to watermark models for intellectual-property
protection~\citep{fernandez_functional_2024} and to analyze the structure of machine
learning models~\citep{ainsworth_git_2022}.

As we will show, these invariances also pose a threat: they can be used not
only to neutralize stegomalware but also to embed it. Embedding a payload in a large
language model this way has significant benefits for the attacker: in exact arithmetic
it preserves model behavior completely, requires no retraining, and needs no
payload-specific extraction script. Existing attacks in the literature all sacrifice at least one of
these three properties.

Fortunately, the attack and countermeasures in this scenario are complementary. While
we are able to demonstrate that it is possible to store significant amounts of
information in a large language model, we also show that the attack can be neutralized
by randomly permuting the weight matrices of the network.
Previous work relies on permutations whose effects are localized, confined within individual
components or layers. We show that, as well as these localized
symmetries, there are \emph{global} permutations that propagate consistently
across every layer. Composing the two, along with careful selection of the permutation to apply,
shuffles 100\% of the parameters in an
LLM, which is sufficient for disruption of all known attacks in the
literature.

Our contributions are as follows:
\begin{itemize}
  \item We show how \emph{global} permutation symmetries, propagating across the whole
  network, can be exploited to neutralize stegomalware in 100\% of the parameters of a
  large language model, beating the previous state of the art which managed less than
  two thirds of all parameters.
  \item We introduce PermaNet, a novel, (theoretically) lossless method of embedding hidden
  information in neural networks.
  \item We demonstrate the efficacy of this neutralization, showing complete disruption
   of payload extraction on existing stegomalware encoding techniques, and complete removal of PermaNet-encoded payloads.
  \item We test our methods of malware embedding and neutralization, verifying a minimal reduction
  in model performance. This is less than the effects of quantization levels considered to be `near lossless',
  and disappears as the level of numerical precision at inference is increased.
\end{itemize}

\section{Threat Model}
\label{sec:background}
\label{sec:threatmodel}
The threat model we define aligns closely with those described in~\citep{liu_stegonet_2020}
and~\citep{gilkarov_neuperm_2025}; this model is representative of how threats are likely to surface
when embedded in pre-trained neural networks distributed to consumers through third party
distribution platforms like Hugging Face. We define the following personas.

\textbf{End user.} The end user of this attack is any consumer of pre-trained neural networks which
are distributed through third party platforms. We do not restrict the end user to a non-expert
persona, as ML experts frequently consume pre-trained models for experimentation or deployment.
The end user may either be an independent actor or part of a larger organization,
and may deploy the model either on a local device or in a cloud environment. We assume
that the end user stores and deploys the model as is, without performing additional transformations
such as fine-tuning or quantization.

\textbf{Adversary.} We define the adversary as a third party provider of malicious pre-trained
models. Unofficial providers are common on distribution platforms and among them malicious
providers are largely indistinguishable from benign ones.
The adversary creates a malicious
neural network and advertises it on a third party distribution platform. They may associate
the malicious model with a trusted provider, such as through claiming their malicious model
as a fine-tune or quantization of a trusted model, or by masquerading as a trusted
organization, creating malicious spoofs of popular models. The goal of the adversary is to achieve
remote code execution or a similar exploit on the end user's machine using a payload
embedded in their malicious model. They must fulfill several conditions in order to achieve
this goal:
\begin{enumerate}
  \item Maintain the performance of the malicious model relative to a benign model, both in
  order to evade detection of malicious behavior and to avoid reduced desirability of a
  degraded model.
  \item Avoid detection with traditional anti-malware tools using methods like
  encoded string detection.
  \item Decode and execute the malicious payload during a stage in the normal deployment of
  the model, without any additional access to the end user's machine. Typically, this will
  occur during the deserialization of the model, such as through depickling.
\end{enumerate}
We suppose that, in order to fulfill these conditions, the adversary uses the PermaNet
method described in this work to store a malicious payload in the ordering of their
model's weights. PermaNet is the steganographic technique rather than a complete attack;
we consider trigger mechanisms to be outside the scope of this work. As in prior work, we assume that a small
\emph{bootstrap} routine runs when the model is loaded.
There have been several real-world vulnerabilities that could be used for this purpose,
for instance through the
well-known pickle deserialization vector, or in code packaged with the model weights.
The bootstrap reads the model's own weight tensors, runs the PermaNet
decoder to recover the payload bytes, and executes them.
PermaNet's bootstrap consists of payload-agnostic tensor-manipulation operations, carrying
no malicious bytes of its own. This attracts less suspicion than the payload it
reconstructs, meaning that signature-based scanning will not detect the payload,
only more benign-looking tensor operations.
We discuss this in more detail in Section~\ref{sec:discussion}.

In~\citet{lin_stegomalware_2015}, three types of stegomalware are described, depending
upon whether the payload, extraction method and cryptographic key for decoding are present
on the target system. The attacks and countermeasures which we describe in this paper
are agnostic to which scenario is considered. PermaNet is not intended as
a full end-to-end attack; instead, it is a specific steganographic method that an attacker
might employ on neural network architectures, following in the spirit of other
neural network steganography research like MaleficNet~\cite{hitaj_maleficnet_2022}.

\section{Permutation-Based Attacks}
\label{sec:method}

In this section, we show how it is possible to embed a potentially malicious payload
into a large language model by exploiting permutation symmetries in the model
parameters. Our argument has the following structure:
\begin{enumerate}
  \item We show that information can be encoded by reordering the rows or
  columns of a given matrix.
  \item We demonstrate how invariances in the structure of an LLM allow for permutation
  of parameter matrices while preserving model functionality.
  \item We show how the two points above can be combined to store a payload in the weights
  of an LLM while perfectly preserving model functionality---assuming perfect precision arithmetic---and
  without requiring that we know the original order of the rows/columns.
  \item Finally, we calculate how much data this allows to be encoded in real-world
  models, showing that it is substantial enough to encode common malware variants.
\end{enumerate}

\subsection{Encodable Bits}
For a matrix with $n$ unique rows, there are $n!$ possible orderings.
Given complete freedom to permute those rows as we like, it is therefore possible to encode
$\lfloor \log_2 n! \rfloor$ bits of information into that ordering.
Intuitively, for a matrix with $n\leq 26$ rows, we label the rows with
a canonical ordering as $A, B, C, \ldots$. We may then list all possible permutations
in alphabetical order, assigning each permutation a number between $0$ and $n!-1$ based on its
position in that list. This allows us to communicate that number to a recipient who knew the
matrix's original order by sending them the matrix whose rows are arranged according to the 
correspondingly ranked permutation.
Equivalently, this can be thought of as encoding bits in the binary representation
of the number, including leading zeros. This process is shown in Figure \ref{fig:encode-decode}.

\begin{figure}[ht]
\centering
\begin{subfigure}[b]{0.48\textwidth}
  \centering
  \includegraphics[width=\linewidth]{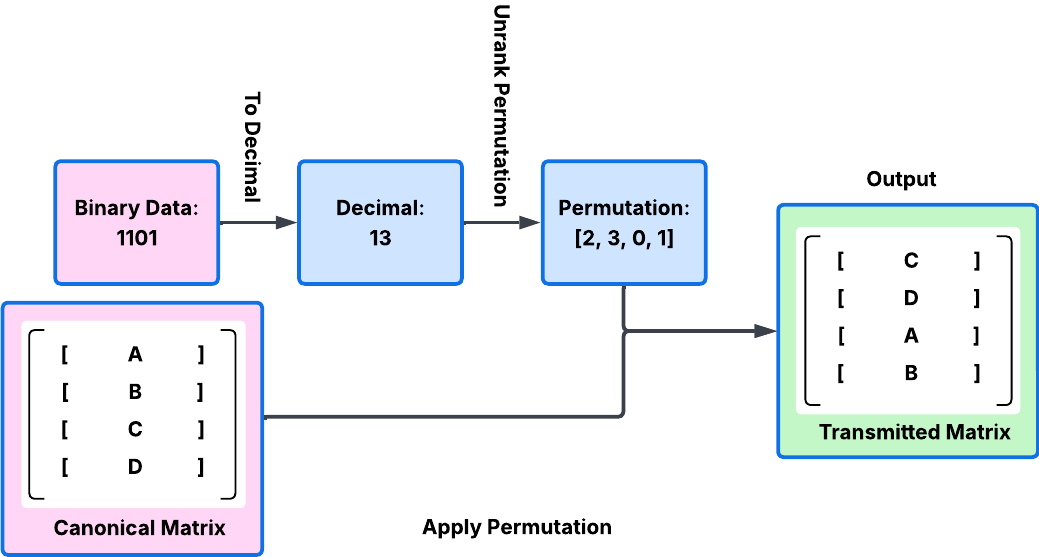}
  \caption{Encoding.}
  \label{fig:encoding}
\end{subfigure}
\hfill
\begin{subfigure}[b]{0.48\textwidth}
  \centering
  \includegraphics[width=\linewidth]{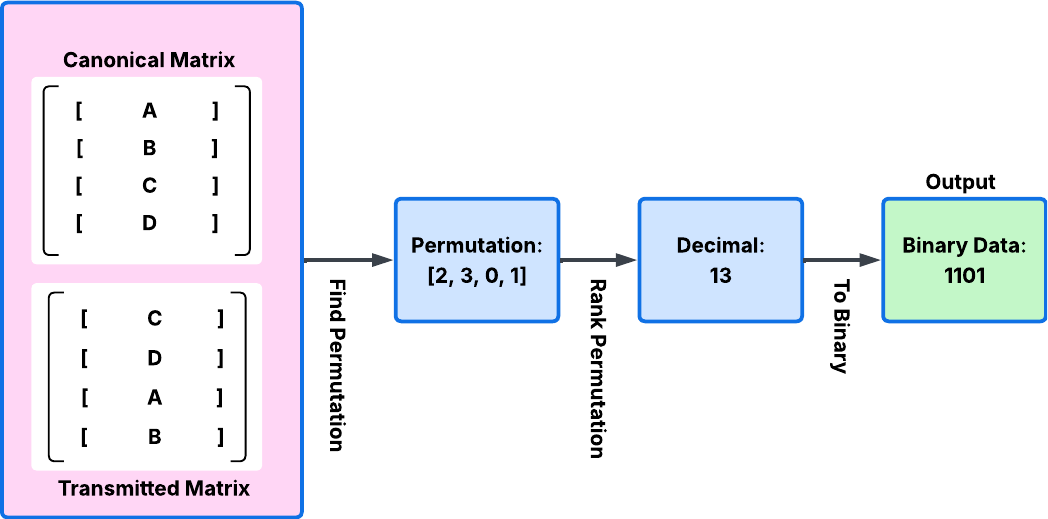}
  \caption{Decoding.}
  \label{fig:decoding}
\end{subfigure}
\caption{Visualization of the method for encoding and decoding information in the
ordering of the rows of a matrix. Left: Binary data is converted into a numerical value, then
encoded as a permutation. That permutation is then applied to the rows of the matrix before
transmission. Right: The recipient calculates the permutation required to transform the
canonical matrix into the received one. They then calculate the permutation's rank and
use this to reconstruct the binary data.}
\label{fig:encode-decode}
\end{figure}

In practice, the recipient does not need to know the original matrix beforehand;
they only need to have previously agreed upon a procedure to choose a canonical ordering of rows of a matrix, 
 that is, a method for constructing a bijection between the matrix's rows and the set $\{0, 1, \ldots, n-1\}$.
The sender and recipient also need efficient ways of calculating the rank of a given
permutation in the list of all possible $n!$ permutations, as well as the ability to
carry out the inverse of that operation.

For the canonical ordering, we deterministically order the rows: for instance by hashing
them, or by sorting each row's entries and comparing rows at their first differing entry.
We prefer the latter, as it allows us to establish a canonical ordering of the rows
which is invariant to permutations of the columns, allowing us to encode information
in both orderings at once.

Although the rank and its inverse can be computed using the lexicographical
ordering, we instead use a different ordering that is simpler to rank and unrank in
linear time~\citep{myrvold_ranking_2001}. Further to this, we modify this algorithm to derive
a non-recursive variant, allowing us to handle large matrices without reaching Python's
recursion-depth limit, as detailed in Appendix~\ref{app:rank-unrank}.

Calculating the information storage capacity of the permutation of $n$ objects requires
computing $\log_2 n!$. In our preliminary investigation, we encountered issues converting $n!$ to a float
for this operation, due to its size. Instead, we make use of the approximation
$$ \log_2 n! \approx \frac{\frac{1}{2}\ln 2 \pi n + n \ln n - n + \frac{1}{12n} - \frac{1}{360n^3}}{\ln 2}, $$
\noindent which is a variation on a well-known bound for $n!$.
Furthermore, we can verify that this approximation costs at most one bit. The proofs of
all results in this section are deferred to Appendix~\ref{app:proofs}.

\begin{restatable}{proposition}{propEncodableBits}
\label{prop:encodable-bits}
  For $n \geq 1$, we have
\begin{align*}
  \lfloor \log_2 n! \rfloor - 1 \leq \left\lfloor \frac{\frac{1}{2}\ln 2 \pi n + n \ln n - n + \frac{1}{12n} - \frac{1}{360n^3}}{\ln 2} \right\rfloor \leq  \lfloor \log_2 n! \rfloor.
\end{align*}
\end{restatable}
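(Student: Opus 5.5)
The plan is to show that the real quantity inside the middle floor, call it $A(n) = S(n)/\ln 2$ with
$$S(n) = \tfrac{1}{2}\ln 2\pi n + n\ln n - n + \tfrac{1}{12n} - \tfrac{1}{360n^3},$$
lies in a half-open window just below $L(n) = \log_2 n!$. Concretely, the target is
$$L(n) - 1 < A(n) \le L(n) \qquad (n\ge 1).$$
Both inequalities in the proposition then follow from monotonicity of the floor. The right inequality gives $\lfloor A(n)\rfloor \le \lfloor L(n)\rfloor$. The left one gives $\lfloor A(n)\rfloor \ge \lfloor L(n)-1\rfloor = \lfloor L(n)\rfloor - 1$.

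To get this window, I would use the classical Stirling series with explicit remainder, in the form
$$\ln n! = S(n) + \frac{\theta_n}{1260\,n^5}, \qquad 0<\theta_n<1, \quad n\ge 1.$$
This follows from Binet's formula, or equivalently from Euler--Maclaurin summation of $\sum \ln k$ with the Bernoulli remainder. The relevant property is that the Stirling series for $\ln\Gamma$ is enveloping for real positive argument: the truncation error has the sign of the first omitted term and is smaller than it in absolute value. Here the first omitted term is $+B_6/(6\cdot5\,n^5) = 1/(1260n^5)$.

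Given this remainder bound, the two halves are immediate.
\begin{itemize}
\item Since the remainder is positive, $S(n) < \ln n!$, hence $A(n) < L(n)$.
\item Since the remainder is below $1/(1260 n^5)\le 1/1260$, we get $L(n) - A(n) < 1/(1260\ln 2) < 1$.
\end{itemize}
This is far stronger than needed. As a sanity check I would verify $n=1$ directly. There $S(1)\approx 0.91894 - 1 + 0.08333 - 0.00278 \approx -0.0005$, so $\lfloor A(1)\rfloor = -1$ while $\lfloor L(1)\rfloor = 0$. The lower bound in the proposition is therefore attained with equality, which confirms the statement is stated tightly and that no stronger claim (such as equality of floors) should be attempted.

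The main obstacle is justifying the enveloping remainder estimate with the correct sign, valid for all integers $n\ge1$ and not merely asymptotically. I would cite the standard result: for real $x>0$, the remainder after the $k$-th term of the Stirling series for $\ln\Gamma(x+1)$ lies strictly between $0$ and the first neglected term. This is proved from Binet's second formula by bounding the integrand $\bigl(\tfrac{1}{e^t-1}-\tfrac1t+\tfrac12\bigr)/t$ between consecutive Taylor partial sums, which alternate around it. If a self-contained route is preferred, I would instead bound $d_n = \ln n! - S(n)$ telescopically. One shows $d_n - d_{n+1}$ has a fixed sign by expanding $(n+\tfrac12)\ln(1+1/n) - 1$ together with the difference of the correction terms, and uses $d_n \to 0$ as $n\to\infty$. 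That yields $0 < d_n < 1/(1260 n^5)$, or at least $0<d_n<\ln 2$, which is all the argument above requires.
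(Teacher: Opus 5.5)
Your proposal is correct and follows essentially the same route as the paper: both rest on the enveloping property of the Stirling series (remainder positive and smaller than the first omitted term $\frac{1}{1260n^5}$), which the paper quotes from Abramowitz--Stegun for $\ln\Gamma(n)$ and then adds $\ln n$, before dividing by $\ln 2$ and taking floors exactly as you do. Your observation that the lower bound is attained at $n=1$ (and likewise at $n=2$, where $\log_2 n!$ is an integer) is consistent with the paper only conjecturing equality of the floors for $n>2$.
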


Hence, using this approximation in place of the true value costs at
most a single bit of storage. We conjecture that for $n>2$ the lower bound could be tightened and that
$\log_2 n!$ always has the same integer part as our approximation.
Through elementary computation, it can be seen that this grows at a rate of $\mathcal{O}(n \log n)$,
so the number of bits that can be stored grows super-linearly with the number of
rows of a matrix that can be permuted.

\subsection{Exploiting Invariances in a Transformer}

We will now show that there is an invariance in standard transformer architectures which
allows for permutations to be applied to all parameter matrices in the model while
preserving the model's function.

Before showing the full case of transformer models, we consider how such an invariance
works in a single-layer MLP. Recall that for an $m \times n$ matrix $W$, permutation of rows can be achieved by construction of
a permutation matrix $P$, with the property that $P^{-1} = P ^ \intercal$.

For input $x \in \mathbb{R}^d$:
\begin{align}
  \mathrm{MLP}(x;\, W_1, W_2) &= W_2\, \mathrm{ReLU}(W_1 x), \label{eq:mlp}
\end{align}
where $W_1 \in \mathbb{R}^{m \times d}$, $W_2 \in \mathbb{R}^{d \times m}$, and $\mathrm{ReLU}(z) = \max(0, z)$ applied elementwise.
Let $P \in \mathbb{R}^{m \times m}$ be a permutation matrix and define $W_1' = P W_1$ and $W_2' = W_2 P^{\intercal}$. Using the
fact that element-wise functions commute with permutations of elements in a vector, we get
\begin{align*}
  \mathrm{MLP}(x;\, W_1', W_2') &= W_2'\, \mathrm{ReLU}(W_1' x) = W_2 P^\intercal\, \mathrm{ReLU}(P W_1 x)\\
  &= W_2 P^\intercal P\, \mathrm{ReLU}(W_1 x) = W_2\, \mathrm{ReLU}(W_1 x) = \mathrm{MLP}(x;\, W_1, W_2).
\end{align*}
That is, applying the permutation $P$ and its transpose inside the network in the right
way preserves the network's behavior. This gives us a family of $m!$ MLPs which
are functionally equivalent but have different parameters. Like most MLP layers in modern
LLMs, our example does not include bias terms, but the same principle would apply to a network
including biases if we apply the same permutation to biases of the first layer.

The use of this simple example has been demonstrated
in~\citep{gilkarov_neuperm_2025, torpmann-hagen_defending_2025} to neutralize malware
hidden in the weights of MLPs. We will show how a more comprehensive permutation,
first observed in \citet{fernandez_functional_2024},
can be used to permute all the weights in an LLM while preserving model functionality.
We will also show extension of this method to modern mixture-of-experts (MoE) architectures.

To start, we show a set of permutations of parameters for each self-attention
and MLP layer of an LLM architecture
that results in equivariance: i.e., permutation of the inputs to the layer results
in an output that is equivalent to the original output up to the same permutation. We
will then demonstrate that applying that same permutation to the input and output embedding
weights results in invariance of the overall architecture.

We consider a multi-head self-attention on a sequence of length $n$ with embedding
dimension $d$.
Let $X \in \mathbb{R}^{n \times d}$ be the embeddings of the input sequence,
an attention head performs the function:
\begin{align}
  \textnormal{Att}(X;\, \theta_{\text{att}}) &= \textnormal{Softmax}\!\left(\frac{XW_Q W_K^\intercal X^\intercal}{\sqrt{d_k}}\right)XW_V \label{eq:att},
\end{align}
where $W_Q \in \mathbb{R}^{d \times d_k}$, $W_K \in \mathbb{R}^{d \times d_k}$, $W_V \in \mathbb{R}^{d \times d_v}$ are
the respective query, key, and value weight matrices, for some query and value dimension
values $d_k$ and $d_v$.
We write $\theta_{\text{att}} = \{W_Q, W_K, W_V, W_O\}$ for the full set of attention weight matrices,
where $W_O \in \mathbb{R}^{h d_v \times d}$ is the output projection matrix.
The values of the outputs of multiple heads are concatenated
and projected back down to the embedding dimension:
\begin{align}
  \textnormal{MHA}(X;\, \theta_{\text{att}}) &= \textnormal{Concat}(\textnormal{head}_1, \ldots, \textnormal{head}_h)\,W_O \label{eq:mha}\\
  \textnormal{head}_i &= \textnormal{Att}\!\left(X;\, \theta_{\text{att}}^{(i)}\right) \notag
\end{align}

Let $P \in \mathbb{R}^{d \times d}$ be a permutation matrix and define $W_Q' := P^\intercal W_Q$,
$W_K' := P^\intercal W_K$, $W_V' := P^\intercal W_V$ and $W_O' := W_O P$. We write $\theta'_{\text{att}} = \{W_Q', W_K', W_V', W_O'\}$\\

\begin{restatable}{proposition}{propMHA}
\label{prop:mha}
  Multi-head attention is equivariant under permutation of input features, i.e.,
  $\textnormal{MHA}(XP;\, \theta'_{\text{att}}) = \textnormal{MHA}(X;\, \theta_{\text{att}})P$.
\end{restatable}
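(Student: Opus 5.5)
The plan is a direct substitution argument that works head by head and uses only $PP^\intercal = I$. Note first that the transformation of $\theta_{\text{att}}$ is applied to every head: each $W_Q^{(i)}, W_K^{(i)}, W_V^{(i)}$ is left-multiplied by $P^\intercal$, while the shared $W_O$ is right-multiplied by $P$.

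\textbf{Step 1 (one head).} Fix a head $i$. The permuted input and the primed weights combine as
\begin{align*}
(XP)\,W_Q'^{(i)} = XPP^\intercal W_Q^{(i)} = XW_Q^{(i)}, \quad (XP)\,W_K'^{(i)} = XW_K^{(i)}, \quad (XP)\,W_V'^{(i)} = XW_V^{(i)}.
\end{align*}
So the query, key and value projections computed from $XP$ with $\theta'$ coincide exactly with those computed from $X$ with $\theta$. The score matrix inside the softmax in \eqref{eq:att} is a function of these projections only: it equals $(XW_Q)(XW_K)^\intercal/\sqrt{d_k}$. Hence the score matrix is unchanged, and so is its row-wise softmax. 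It follows that $\textnormal{Att}(XP;\theta'^{(i)}_{\text{att}}) = \textnormal{Att}(X;\theta^{(i)}_{\text{att}})$, so every head output is literally invariant, not merely equivariant.

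\textbf{Step 2 (concatenation).} Concatenation along the feature axis involves no parameters. The concatenated $n\times hd_v$ matrix is therefore identical in both settings.

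\textbf{Step 3 (output projection).} Right-multiplying that identical matrix by $W_O' = W_O P$ gives $\textnormal{Concat}(\cdots)\,W_O P = \textnormal{MHA}(X;\theta_{\text{att}})P$, which is the claim.

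The main obstacle is not algebraic but one of bookkeeping: $P$ must act on the $d$-dimensional embedding index, so it multiplies $X$ on the right. The weights must then be multiplied by $P^\intercal$ on the left, so that $P$ cancels against $P^\intercal$ before any nonlinearity is applied. I would also remark that positional encodings applied to $Q,K$ (such as RoPE) act on the $d_k$ side, so they are unaffected. The softmax acts along the sequence axis, which $P$ never touches, so no commutation property of elementwise functions is even needed here, unlike in the MLP case.
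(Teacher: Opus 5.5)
Your proposal is correct and follows essentially the same route as the paper's proof. Both show each head is invariant because $XPP^\intercal W_Q = XW_Q$ (and likewise for the key and value projections), note that the concatenation is therefore unchanged, and obtain the trailing $P$ from $W_O' = W_O P$.
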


In our definition of multi-head attention, we omit Rotary Position Embedding (RoPE)~\citep{su2024roformer},
which applies rotations to the query and key activations to encode sequence position. This choice is for
simplicity of exposition, and does not affect the proof since the rotation is on the internal dimensions of
the query and key vectors, not on the permuted embedding dimension.

Next we consider the MLP layers. These can be defined as in our example above, but
more typically in modern neural networks, a gated MLP is used of the form:
\begin{align}
  \textnormal{MLP}(X;\, \theta_{\text{mlp}}) &= \bigl(\sigma(X W_{\text{gate}}) \odot X W_{\text{up}}\bigr) W_{\text{down}} \label{eq:gmlp}
\end{align}
where $W_{\text{gate}} \in \mathbb{R}^{d \times d_{\text{ff}}}$, $W_{\text{up}} \in \mathbb{R}^{d \times d_{\text{ff}}}$, $W_{\text{down}} \in \mathbb{R}^{d_{\text{ff}} \times d}$.
$\sigma$ is some element-wise non-linear function, such as ReLU or GELU.
Which non-linear activation function is not relevant, it only matters that it
is applied elementwise, as is standard in all the most common LLM architectures.
We write $\theta_{\text{mlp}} = \{W_{\text{gate}}, W_{\text{up}}, W_{\text{down}}\}$.

Let $P \in \mathbb{R}^{d \times d}$ be a permutation matrix and define $W_{\text{gate}}' := P^\intercal W_{\text{gate}}$,
$W_{\text{up}}' := P^\intercal W_{\text{up}}$ and $W_{\text{down}}' := W_{\text{down}} P$.
We write $\theta'_{\text{mlp}} = \{W_{\text{gate}}', W_{\text{up}}', W_{\text{down}}'\}$.
This gives us the following proposition for an equivariance in gated MLP layers.\\

\begin{restatable}{proposition}{propGatedMLP}
\label{prop:gatedmlp}
  For gated MLPs, we have the equivariance
  $\textnormal{MLP}(XP;\, \theta'_{\text{mlp}}) = \textnormal{MLP}(X;\, \theta_{\text{mlp}})P.$
\end{restatable}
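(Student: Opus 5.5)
The plan is a direct substitution followed by cancellation of $PP^\intercal$. The one identity needed is $PP^\intercal = I$, which holds because $P$ is a permutation matrix (noted earlier: $P^{-1} = P^\intercal$). Everything else is associativity of matrix multiplication.

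First, the two input projections. We have
\[
(XP)\,W_{\text{gate}}' = X P P^\intercal W_{\text{gate}} = X W_{\text{gate}},
\qquad
(XP)\,W_{\text{up}}' = X W_{\text{up}}.
\]
So the hidden pre-activations in $\mathbb{R}^{n\times d_{\text{ff}}}$ are exactly the same as in the unpermuted network. Note that the permutation acts only on the embedding dimension $d$; the hidden dimension $d_{\text{ff}}$ is left alone. Consequently no commutation of $\sigma$ or $\odot$ with $P$ is needed here, unlike the single-layer MLP example above.

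Second, since the arguments agree, the hidden activation $H := \sigma(XW_{\text{gate}})\odot XW_{\text{up}}$ is identical in both networks. Here we only use that $\sigma$ and $\odot$ are applied to identical matrices.

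Third, the output projection gives
\[
H W_{\text{down}}' = H W_{\text{down}} P = \textnormal{MLP}(X;\,\theta_{\text{mlp}})\,P,
\]
which is the claimed equivariance.

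There is no real obstacle in this argument. The only point needing care is bookkeeping: checking that the dimensions of $P^\intercal W_{\text{gate}}$ (which is $d\times d_{\text{ff}}$) and of $W_{\text{down}}P$ (which is $d_{\text{ff}}\times d$) are consistent. This follows the same pattern as Proposition~\ref{prop:mha}.
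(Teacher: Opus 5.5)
Your proof is correct and is essentially the paper's argument: substitute the primed weights, cancel $PP^\intercal = I$ in the gate and up projections, and factor $P$ out of the down projection. You are also right that no commutation of $\sigma$ or $\odot$ with $P$ is needed; the paper's proof mentions element-wise operations, but its chain of equalities never actually uses that property.
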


Additionally, before each self-attention layer and MLP layer, there is a per-feature normalization layer.
In the case of the popular family of open-weight Llama models~\citep{grattafiori_llama_2024}, this is RMSNorm. This also requires a permutation.
\begin{align}
  \textnormal{Ln}(x;\, \gamma) &= \gamma \odot \frac{x}{\sqrt{\textnormal{RMS}^2(x) + \varepsilon}} \label{eq:rmsnorm}
\end{align}
with $\textnormal{RMS}^2(x) = \frac{1}{d}\sum_{i=1}^{d} x_i^2$, $\gamma \in \mathbb{R}^d$, $\gamma' := \gamma P$.\\

\begin{restatable}{proposition}{propRMSNorm}
\label{prop:rmsnorm}
  For RMSNorm, we have the equivariance
  $\textnormal{Ln}(xP;\, \gamma') = \textnormal{Ln}(x;\, \gamma)\, P.$
\end{restatable}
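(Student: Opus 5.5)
The plan is to verify the identity directly from the definition in \eqref{eq:rmsnorm}, treating $x$ as a row vector in $\mathbb{R}^{1\times d}$ so that $xP$ is a permutation of its entries. The key observation is that $\textnormal{RMS}^2$ is a symmetric function of the coordinates of its argument. The elementwise product $\odot$ commutes with a simultaneous permutation of both factors. With these two facts the proof is a short chain of equalities.

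First, we show that the normalizing scalar is unchanged. Write $xP$ for the vector with entries $(xP)_j = x_{\pi(j)}$, where $\pi$ is the permutation encoded by $P$. Then
\begin{align*}
\textnormal{RMS}^2(xP) = \frac{1}{d}\sum_{j=1}^{d} x_{\pi(j)}^2 = \frac{1}{d}\sum_{i=1}^{d} x_i^2 = \textnormal{RMS}^2(x),
\end{align*}
since $\pi$ is a bijection and we are only reindexing the sum. Equivalently, $\|xP\|_2 = \|x\|_2$ because $PP^\intercal = I$. Consequently the scalar $s := (\textnormal{RMS}^2(x)+\varepsilon)^{-1/2}$ is the same for $x$ and for $xP$.

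Second, we use that the Hadamard product satisfies $(aP)\odot(bP) = (a\odot b)P$ for row vectors $a,b$ and a permutation matrix $P$. This holds because both sides have $j$-th entry $a_{\pi(j)}b_{\pi(j)}$. Taking $\gamma' = \gamma P$ and combining with the first step gives
\begin{align*}
\textnormal{Ln}(xP;\,\gamma') = (\gamma P)\odot (s\, xP) = s\,(\gamma\odot x)P = \textnormal{Ln}(x;\,\gamma)\,P,
\end{align*}
where we pull the scalar $s$ out of the product and apply linearity of right multiplication by $P$.

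No step here is a genuine obstacle. The only point needing care is bookkeeping of conventions. The statement uses $\gamma' = \gamma P$, matching the row-vector convention $X\in\mathbb{R}^{n\times d}$ used in Propositions~\ref{prop:mha} and~\ref{prop:gatedmlp}, so I will state that convention explicitly and check that the same $\pi$ indexes both $xP$ and $\gamma P$. I will also note that for a sequence $X$ the norm acts row-wise, so the result extends immediately to $\textnormal{Ln}(XP;\gamma') = \textnormal{Ln}(X;\gamma)P$. The same argument would also cover LayerNorm: its mean and variance are permutation-invariant, and its bias would be permuted like $\gamma$.
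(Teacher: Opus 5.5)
Your proposal is correct and follows essentially the same route as the paper, which uses the same short chain of equalities: $\textnormal{RMS}^2(xP) = \textnormal{RMS}^2(x)$, and the elementwise product commutes with a simultaneous permutation of both factors, so $(\gamma P)\odot (xP) = (\gamma\odot x)P$. Your explicit index bookkeeping and the row-wise extension to a sequence $X$ are harmless additions.
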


Letting $W_E \in \mathbb{R}^{|V| \times d}$ be the embedding matrix and $z^0 = \mathbf{t} W_E$ where $\mathbf{t}$ is the one-hot token sequence, we define:
\begin{align}
  h^l &= z^l + \textnormal{MHA}(\textnormal{Ln}(z^l;\, \gamma_{\text{att}}^l);\, \theta_{\text{att}}^l) \label{eq:hl}\\
  z^{l+1} &= h^l + \textnormal{MLP}(\textnormal{Ln}(h^l;\, \gamma_{\text{mlp}}^l);\, \theta_{\text{mlp}}^l) \label{eq:zl}
\end{align}
The output distribution over the vocabulary is given by:
\begin{align}
  p &= \textnormal{Softmax}(\textnormal{Ln}(z^L;\, \gamma_{\text{final}})\, W_U) \label{eq:softmax}
\end{align}
where $W_U \in \mathbb{R}^{d \times |V|}$ is the unembedding matrix and $L$ is the number of layers.
We define $W_U' := P^\intercal W_U$, $W_E' := W_E P$, $\gamma_{\text{att}}^{l\prime} := \gamma_{\text{att}}^l P$, $\gamma_{\text{mlp}}^{l\prime} := \gamma_{\text{mlp}}^l P$, and $\gamma_{\text{final}}' := \gamma_{\text{final}} P$. Putting all this together
gives the following theorem.\\

\begin{restatable}{theorem}{thmTransformerInvariance}
\label{thm:transformer-invariance}
  For an $L$ layer transformer with parameters
  \[
    \Theta = \bigl(W_E,\, \{\theta_{\text{att}}^l,\, \gamma_{\text{att}}^l,\, \theta_{\text{mlp}}^l,\, \gamma_{\text{mlp}}^l\}_{l=1}^{L},\, \gamma_{\text{final}},\, W_U\bigr),
  \]
  the function computed by the network is identical to that computed by
  \[
    \Theta' = \bigl(W_E',\, \{\theta_{\text{att}}^{l\prime},\, \gamma_{\text{att}}^{l\prime},\, \theta_{\text{mlp}}^{l\prime},\, \gamma_{\text{mlp}}^{l\prime}\}_{l=1}^{L},\, \gamma_{\text{final}}',\, W_U'\bigr).
  \]
  Furthermore, this permutes every parameter matrix in the model.
\end{restatable}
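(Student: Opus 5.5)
We prove the functional identity by induction over the residual stream, using Propositions~\ref{prop:mha}, \ref{prop:gatedmlp} and~\ref{prop:rmsnorm} as black boxes. Write $z'^l, h'^l$ for the hidden states computed under $\Theta'$. The inductive claim is that $z'^l = z^l P$ for every $l = 0,\dots,L$, and also $h'^l = h^l P$ for $l = 1,\dots,L$.

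\textbf{Base case.} This follows from the embedding: $z'^0 = \mathbf{t} W_E' = \mathbf{t} W_E P = z^0 P$.

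\textbf{Inductive step.} Assume $z'^l = z^l P$. Proposition~\ref{prop:rmsnorm}, applied row-wise to each token, gives
\[
\textnormal{Ln}(z'^l;\, \gamma_{\text{att}}^{l\prime}) = \textnormal{Ln}(z^l;\, \gamma_{\text{att}}^l)\,P .
\]
Feeding this into Proposition~\ref{prop:mha} yields
\[
\textnormal{MHA}\bigl(\textnormal{Ln}(z^l;\gamma_{\text{att}}^l)P;\, \theta_{\text{att}}^{l\prime}\bigr) = \textnormal{MHA}\bigl(\textnormal{Ln}(z^l;\gamma_{\text{att}}^l);\, \theta_{\text{att}}^{l}\bigr)P .
\]
Right-multiplication by $P$ distributes over the residual sum in~\eqref{eq:hl}, so $h'^l = h^l P$. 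The same three moves, with Proposition~\ref{prop:gatedmlp} in place of Proposition~\ref{prop:mha}, applied to~\eqref{eq:zl} give $z'^{l+1} = z^{l+1}P$.

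\textbf{Output.} At the head, Proposition~\ref{prop:rmsnorm} with $\gamma_{\text{final}}'$ gives $\textnormal{Ln}(z'^L;\gamma_{\text{final}}') = \textnormal{Ln}(z^L;\gamma_{\text{final}})P$. Multiplying by $W_U' = P^\intercal W_U$ and using $PP^\intercal = I$, the logits, and hence $p$ in~\eqref{eq:softmax}, are unchanged.

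\textbf{Main obstacle.} The delicate point is that every component must use one and the same $P$ on the $d$-dimensional residual stream. This works only because each sublayer is equivariant with respect to that single $P$, and because the residual connections force the permutation to propagate unchanged across layers. A second point to check is that Proposition~\ref{prop:rmsnorm} is stated for a single vector $x$ while $z^l$ is a matrix. This is harmless, since RMSNorm acts independently on each row and $(ZP)_{i,:} = Z_{i,:}P$. For the same reason, the causal mask and RoPE act only along the sequence or head dimensions, so they commute with right-multiplication by $P$, as already remarked after Proposition~\ref{prop:mha}.

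\textbf{Every parameter matrix is permuted.} This part follows by inspecting the definitions of $\Theta'$:
\begin{itemize}
\item $W_E'$ has its columns permuted, and $W_U'$ has its rows permuted.
\item Each $W_Q, W_K, W_V, W_{\text{gate}}, W_{\text{up}}$ has its rows permuted, and each $W_O, W_{\text{down}}$ has its columns permuted.
\item Each gain vector $\gamma$ is permuted.
\end{itemize}
Every parameter tensor listed in $\Theta$ therefore appears in $\Theta'$ left- or right-multiplied by $P$ or $P^\intercal$, along its $d$-dimensional axis. The statement has to be read this way, as a structural claim that each matrix is reindexed. The stronger claim that every individual entry moves requires choosing $P$ to be a derangement, and even that fails for entries whose value coincides with that of their image. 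That refinement concerns the choice of $P$ treated later in the paper, not this theorem.
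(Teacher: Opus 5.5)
Your proof is correct and takes the same route as the paper, which also propagates $z^{l\prime} = z^l P$ and $h^{l\prime} = h^l P$ layer by layer via Propositions~\ref{prop:rmsnorm}, \ref{prop:mha} and~\ref{prop:gatedmlp}, cancels $PP^\intercal$ at the unembedding, and gets coverage by observing that each matrix of $\Theta'$ is a permuted copy of one in $\Theta$. Your additional remarks are sound clarifications that the paper leaves implicit: RMSNorm acting row-wise, the causal mask and RoPE commuting with right-multiplication by $P$, and reading the coverage claim as structural rather than entry-wise.
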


It is also possible to permute the internal dimension of Gated MLPs, with each layer
having a distinct permutation matrix.
Let $Q \in \mathbb{R}^{d_{\text{ff}} \times d_{\text{ff}}}$ be a permutation matrix and define $W_{\text{gate}}' := W_{\text{gate}} Q$,
$W_{\text{up}}' := W_{\text{up}} Q$ and $W_{\text{down}}' := Q^\intercal W_{\text{down}}$.
We write $\theta'_{\text{mlp}} = \{W_{\text{gate}}', W_{\text{up}}', W_{\text{down}}'\}$.\\

\begin{restatable}{proposition}{propGatedMLPInner}
\label{prop:gatedmlp-inner}
  For gated MLPs, we have the following invariance: $\textnormal{MLP}(X;\, \theta'_{\text{mlp}}) = \textnormal{MLP}(X;\, \theta_{\text{mlp}})$
\end{restatable}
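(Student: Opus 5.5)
The plan is a direct substitution into the definition \eqref{eq:gmlp}, relying on two elementary facts about a permutation matrix $Q$. First, $QQ^\intercal = I$. Second, right-multiplication by $Q$ permutes the columns of a matrix, and this commutes with any elementwise operation. The argument mirrors the single-layer ReLU example given before Proposition~\ref{prop:mha}, except that the permutation now acts on the hidden dimension $d_{\text{ff}}$ and is applied on the right.

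\textbf{Step 1: the two branches.} Substituting the primed weights gives $XW_{\text{gate}}' = (XW_{\text{gate}})Q$ and $XW_{\text{up}}' = (XW_{\text{up}})Q$. These are just associativity of matrix multiplication.

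\textbf{Step 2: the elementwise nonlinearity.} Since $\sigma$ acts entrywise and right-multiplication by $Q$ only rearranges columns, we have $\sigma(AQ) = \sigma(A)Q$ for any $A \in \mathbb{R}^{n \times d_{\text{ff}}}$.

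\textbf{Step 3: the Hadamard product.} This is the one point needing a sentence of justification. For matrices of equal shape, $(AQ)\odot(BQ) = (A\odot B)Q$. The reason is that both sides have, in column $j$, the entrywise product of column $\pi(j)$ of $A$ and of $B$, where $\pi$ is the permutation encoded by $Q$. Applying this with $A = \sigma(XW_{\text{gate}})$ and $B = XW_{\text{up}}$ gives
\[
\sigma(XW_{\text{gate}}')\odot XW_{\text{up}}' = \bigl(\sigma(XW_{\text{gate}})\odot XW_{\text{up}}\bigr)Q .
\]

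\textbf{Step 4: the down projection.} Multiplying on the right by $W_{\text{down}}' = Q^\intercal W_{\text{down}}$ produces the factor $QQ^\intercal = I$. What remains is exactly $\textnormal{MLP}(X;\,\theta_{\text{mlp}})$, which proves the claim.

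The only real obstacle is Step 3: it is the one place where the gating differs from the plain MLP example, so I will justify it explicitly in index notation. I will also remark that any biases on the gate and up projections must be permuted by the same $Q$, although the stated architecture has none.
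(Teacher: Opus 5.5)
Your proposal is correct and follows the paper's proof essentially line for line: substitute the primed weights, pull $Q$ through the elementwise $\sigma$ and the Hadamard product, then cancel via $QQ^\intercal = I$. Your column-wise justification of $(AQ)\odot(BQ) = (A\odot B)Q$ is a step the paper uses without comment, so it adds rigour.
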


Unlike the permutation of the embedding matrix, where the same permutation and its
inverse must be applied throughout the network, different permutations can be applied
to the internal dimensions of the MLPs at each layer (and to different experts in
Mixture of Experts (MoE) models).

As we will see shortly, this means that while the
embedding permutation is the natural choice for removing existing malware implementations,
permutation of the internal dimensions gives significantly more storage capacity when
the goal is encoding bits into the model.

While our method generalizes naturally between dense and mixture-of-experts transformer
models, one caveat is that it does not cover all bias vectors in models which have bias
terms in self-attention layers or in the router networks of mixture-of-experts, with OpenAI's GPT-OSS models
being notable examples of both of these.
For these we further implement permutations
of the order of experts in a layer and of key-value heads in the self-attention mechanisms,
as discussed in Appendix~\ref{app:gpt-oss}.

\subsection{Model Capacity}
To encode information into our LLM, we find the matrix $P$ which encodes
the message that we want. To do this, we first find the permutation matrix $P_{\text{can}}$
which is the permutation matrix such that the columns of $W_E P_{\text{can}}$
are in canonical order. We then find the matrix $P'$ which is the matrix encoding
the $n$th permutation in our ranking, where $n$ is the bits that we wish to encode.
We then apply the permutation $P := P_{\text{can}} P'$ to $W_E$, as well as applying $P$
and $P^{\intercal}$ to the other matrices in our model as dictated by Theorem~\ref{thm:transformer-invariance}.

To decode the message, we again examine the columns of the matrix $W_E'$, this time,
determining the permutation that would be required to get the matrix from
canonical order into its current order. Finding the ranking of this permutation gives us
the encoded bits. Pseudocode for both the encoding and decoding procedures is given in
Appendix~\ref{app:encode-alg}.

It is possible to permute both the columns and rows of a matrix and encode information
in both permutations. Since permuting the rows of a matrix does not change the entries
in a column, only their order, if the hashing function is invariant to that order,
then establishing the ordering of the columns is not required to establish the
ordering of the rows (or vice-versa).

Table~\ref{tab:capacity} shows the embedding dimension and resulting embeddable bit capacity
for several models of interest, computed using the lower bound of Proposition~\ref{prop:encodable-bits}.
For context, the real malware samples embedded by MaleficNet~\citep{hitaj_maleficnet_2022} range from
a few kilobytes to megabytes, so these capacities suffice for the smaller payloads on every model
considered and for the majority on our larger targets.

In practice, the full capacity of the model to store information may not be needed. In
order to not require payload-specific data about the length of the payload in the extraction script, it is
possible to use the first $\lceil \log_2 m \rceil$ encoded bits to give the number
of subsequent bits to be decoded, where $m$ is the total number of encodable bits.

Even with roughly a billion more parameters, Llama 3 8B has the
same capacity as Mistral 7B, since the extra parameters are exclusively in the non-permuted dimension of the
 embedding matrix and are used to accommodate a larger vocabulary size.

Despite having around 13 billion fewer parameters than CodeLlama, GPT-OSS has significantly
greater capacity. This is primarily due to having so many experts, though the internal
dimensions of the MLPs are smaller (2,880 vs 22,016), there are significantly more of
them (768 vs 48).

\begin{table}[ht]
\centering
\small
\begin{tabular}{l rr rrrr rr}
\toprule
 & \multicolumn{2}{c}{Embeddings} & \multicolumn{4}{c}{MLPs} & \multicolumn{2}{c}{Total} \\
\cmidrule(lr){2-3} \cmidrule(lr){4-7} \cmidrule(lr){8-9}
Model & $d$ & bits & $d_{\text{ff}}$ & $L$ & $E$ & bits & bits & KB \\
\midrule
TinyLlama     & 2{,}048 & 19{,}580 & 5{,}632  & 22 & --  & 1{,}365{,}166   & 1{,}384{,}746   & \textbf{173.09}      \\
Mistral 7B    & 4{,}096 & 43{,}250 & 14{,}336 & 32 & --  & 5{,}672{,}544   & 5{,}715{,}794   & \textbf{714.47}      \\
Codestral 22B & 6{,}144 & 68{,}465 & 16{,}384 & 56 & --  & 11{,}521{,}832  & 11{,}590{,}297  & \textbf{1{,}448.79}  \\
Llama 3 8B    & 4{,}096 & 43{,}250 & 14{,}336 & 32 & --  & 5{,}672{,}544   & 5{,}715{,}794   & \textbf{714.47}      \\
CodeLlama 34B$^{\dagger}$  & 8{,}192  & 94{,}685  & 22{,}016 & 48  & --  & 13{,}720{,}992  & 13{,}815{,}677  & \textbf{1{,}726.96}  \\
Llama 3.3 70B$^{\dagger}$  & 8{,}192  & 94{,}685  & 28{,}672 & 80  & --  & 30{,}656{,}000  & 30{,}750{,}685  & \textbf{3{,}843.84}  \\
Llama 3.1 405B$^{\dagger}$ & 16{,}384 & 205{,}747 & 53{,}248 & 126 & --  & 95{,}659{,}830  & 95{,}865{,}577  & \textbf{11{,}983.20} \\
GPT-OSS 20B   & 2{,}880 & 28{,}948 & 2{,}880  & 24 & 32  & 22{,}232{,}064  & 22{,}261{,}012  & \textbf{2{,}782.63}  \\
GPT-OSS 120B$^{\dagger}$ & 2{,}880 & 28{,}948 & 2{,}880  & 36 & 128 & 133{,}392{,}384 & 133{,}421{,}332 & \textbf{16{,}677.67} \\
\bottomrule
\end{tabular}
\caption{We show the embedding dimension and embeddable bit capacity per model. Changing the embeddings is most useful for ensuring that
all parameters are permuted; however, permuting the MLP layers' internal dimensions
contributes the vast majority of the capacity of a typical LLM. For mixture-of-experts
models the inner permutation is applied per expert, so MLP capacity scales with the
number of experts $E$ (shown as ``--'' for dense models).
Models marked $^{\dagger}$ are included for illustrative purposes
but were too large to use in our experiments.}
\label{tab:capacity}
\end{table}

\section{Permutation-Based Countermeasures}
\label{sec:countermeasures}

In previous work~\citep{gilkarov_neuperm_2025, torpmann-hagen_defending_2025},
it was shown that permutation of model parameters was enough to
provide neutralization of potential stegomalware. However, \citet{gilkarov_neuperm_2025} 
achieved permutation of only 60\% of LLM parameters, while the other considered
only MLPs and CNNs. By applying a random permutation as described by
Theorem~\ref{thm:transformer-invariance}, our method transforms 100\% of the LLM weights.
This has the effect of neutralizing all known existing stegomalware techniques, as well
 as the malware technique shown in this paper.
Concretely, we improve upon previous work on permutation-based neutralization of stegomalware in
three ways:

\begin{enumerate}
  \item Whereas previous permutation-based neutralization relied only on \emph{local}
  symmetries, confined within individual components or layers, we additionally exploit
  \emph{global} permutation symmetries that propagate consistently across every layer
  of the network.
  \item We use random derangements---permutations in which no element is returned to its
  original position---rather than random permutations.
  Use of random permutations can pose risks if payload is recoverable from only
  one or two rows/columns returning to their original position.
  \item In some cases, we apply multiple permutations to the same parameter
  matrix, showing that this is not only possible, but necessary to counter permutation-based
  stegomalware.
\end{enumerate}

The first of these improvements is the most substantial, giving the ability
to remove stegomalware from all parameters of an LLM. The second improvement is more
nuanced; we show that for small models it prevents failures which can occur
at low but measurable probability.
In Tables \ref{tab:neutralization-coverage} and \ref{tab:permutation-coverage} we compare the effectiveness of our proposed
neutralization scheme against NeuPerm's neutralization scheme for LLMs with
a MaleficNet payload.

\begin{table}[ht]
\centering
\begin{tabular}{lcc}
\toprule
 & NeuPerm & Ours \\
\midrule
Parameter-wise in MLPs/Gated MLPs            & $\checkmark$ & \checkmark \\
Parameter-wise in self-attention  & $\checkmark^{\ast} $   & \checkmark \\
Parameter-wise in embeddings      & $\times$   & \checkmark \\
Permutation-based                 & $\checkmark^\dagger$   & \checkmark \\
\bottomrule
\end{tabular}
\caption{Stegomalware techniques neutralized by NeuPerm compared with our method. ($\ast$) Has a small probability of
failure. ($\dagger$) Small payloads in embeddings may not be neutralized.}
\label{tab:neutralization-coverage}
\end{table}

We verify these results against TinyLlama, encoding payloads using existing stegomalware techniques from the literature: MaleficNet~\cite{hitaj_maleficnet_2022}, sign mapping~\cite{liu_stegonet_2020} and
Least Significant Byte (LSB) substitution~\cite{liu_stegonet_2020}.
All experiments use TinyLlama-1.1B-Chat, loaded in single precision (\texttt{float32}), with
payloads embedded into a single target matrix at a time (a token embedding, an MLP projection,
or an attention query matrix). We report the rate of \emph{successful neutralization}, defined
as a trial in which the payload can no longer be recovered from the permuted model. As MaleficNet embedding is randomized, we average
each MaleficNet rate over 5 distinct embeddings $\times$ 20 neutralization permutations (100 trials);
LSB embedding is deterministic, so we embed once and apply 100 random neutralization
permutations.

\begin{table}[ht]
\centering
\begin{tabular}{l cc cc cc cc}
\toprule
 & \multicolumn{2}{c}{MaleficNet} & \multicolumn{2}{c}{LSB (partial)} & \multicolumn{2}{c}{LSB (full)} & \multicolumn{2}{c}{Sign-mapping} \\
\cmidrule(lr){2-3} \cmidrule(lr){4-5} \cmidrule(lr){6-7} \cmidrule(lr){8-9}
 & NeuPerm & Ours & NeuPerm & Ours & NeuPerm & Ours & NeuPerm & Ours \\
\midrule
Embeddings      & 0\%   & 100\% & 0\% & 100\% & 0\% & 100\% & 0\% & 100\% \\
MLPs            & 100\% & 100\% & 100\% & 100\% & 100\% & 100\% & 100\% & 100\% \\
Self-attention  & 77\%  & 100\% & 77\% & 100\% & 98\% & 100\% & 98\% & 100\% \\
\bottomrule
\end{tabular}
\caption{Neutralization success rate across stegomalware techniques: a small MaleficNet payload, partial LSB substitution, full LSB substitution, and sign-mapping.}
\label{tab:permutation-coverage}
\end{table}

We find that the permutations proposed by NeuPerm are sufficient
to neutralize malware in MLP layers (allowing the extension to gated MLPs).
However, for MaleficNet, we find that successful neutralization of the query matrix occurs
only 77\% of the time.
In Llama models, grouped-query attention is used, where each key-value pair is associated with multiple queries.
NeuPerm works by permuting the order of these key-value pairs (KV heads) and their corresponding query sets.
In cases where it is recovered, the randomly chosen permutation
fixes at least two of the KV heads in place. In TinyLlama, there are only 4 KV
heads per layer, so this occurs in 7 out of the 24 possible permutations.

For LSB, the success of the neutralization depends on the size and structure of the payload.
Unlike MaleficNet, LSB substitution has no error correction, so the payload survives a
permutation only if every weight it occupies is left in place. A small payload occupies just
the first few hundred weights of the target tensor, which all lie within a single permutation
block (row~0 of an MLP matrix, or the first KV-head block of an attention matrix). It therefore
survives whenever that block is mapped to itself: for the query matrix, with 4 KV heads this
is 6 of the $4!=24$ block permutations, giving a 25\% survival rate. A larger payload spans
every block, so only the identity permutation leaves it intact---survival drops to
$1/24\approx4\%$, and neutralization using a random permutation succeeds $\approx 96\%$ of the time.

For sign-mapping, the performance is dependent upon how the payload
is distributed across the parameter matrix.
We chose the situation where the selected
indices are distributed randomly but if an implementation were chosen which
selects the first available parameter with the correct sign, we would see a result
matching LSB (partial).

For larger models, we would expect the survival rate on NeuPerm to drop, as fewer permutations
return enough indices to their original positions to allow payload recovery. However, even
with large models, the number of KV heads is relatively small (for instance, Llama-3.3-70B
has 8), so a small payload confined to the first row of a weight matrix may survive with NeuPerm 12.5\% of the time. While a payload
embedded across all eight KV heads would be destroyed all but one time in 40,320, this risk
is still reduced to zero using the derangements our method proposes.

We can reason that other variants of EvilModel and StegoNet will also be neutralized
by our approach, as sign-mapping, resilience training and value-mapping all rely on data
being in expected indices. Similarly, LSB variants such as fast-substitution will fail
wherever LSB does, since they expect the payload to be in an order which the permutation
disrupts.

In order to generate derangements, we sample random permutations uniformly with
rejection sampling. This creates a uniform distribution over derangements and has
an expected rejection rate of 63\%. This makes sampling more expensive but still a
negligible part of the full cost of neutralization.

For previous stegomalware, permuting just along the embedding dimension of the LLM is
sufficient to ensure removal of stegomalware. Now, PermaNet specifically requires 
that a random permutation is applied along the dimension along which information was
encoded. This means that to guarantee full removal of all published stegomalware,
it is necessary to apply both the permutation on the embedding dimension and permutation
of the internal dimension of MLP layers.

\section{Performance of Permuted Models}
\label{sec:perf-degradation}
\label{sec:experiments}

Theorem~\ref{thm:transformer-invariance} establishes that the family of permutations
used in our method exactly preserves model behavior in real-valued arithmetic. 
In a deployed model, however, parameters and
activations are stored and computed in finite precision, so two models that are
mathematically equivalent at perfect precision will not necessarily produce identical outputs.

The purpose of the experiments described below is to quantify the effect of our proposed permutations by
measuring the difference in output between an unmodified reference model and a permuted candidate.
In order to give intuition for the scale of this divergence, we compare it against the divergence introduced by
some of the mildest post-training quantization schemes, which are widely accepted as retaining model
utility~\citep{dettmers_llm_int8_2022, kurt_which_2026} but are not theoretically lossless.
For the quantized models we
use popular community GGUF and LLM.int8() quantizations of the reference models available on
Hugging Face (repositories listed in Appendix~\ref{app:model-repos}).

Throughout, we draw permutations uniformly at random. This reflects both settings of
interest: a defender's neutralization permutation is sampled at random
(Section~\ref{sec:countermeasures}), while an attacker's message-encoding permutation is,
by construction, indistinguishable from a uniformly random one once the message is hashed
(Section~\ref{sec:detectability}). Measuring the divergence induced by random permutations
therefore characterizes the performance impact in both the attack and defense settings.

\paragraph{Inference protocol}

For each experiment we compare two models: a \emph{reference} model $R$ and a
\emph{candidate} model $C$, which share an architecture and a tokenizer but differ in weight
ordering, numerical precision, or weight quantization.
Both models are evaluated on a shared
corpus drawn from the WikiText-2 test split~\citep{merity_pointer_2017}, sentence-segmented and filtered
to lengths of $20$--$100$ tokens; from this filtered set we sample $N = 500$ sentences
using a fixed seed so that $R$ and $C$ receive identical input tokens.
Each
model is run in a forward pass over the corpus and, at every non-padding token position,
we record the top-$k$ token indices and their associated raw (pre-softmax) logits, with
$K = 1000$.

\paragraph{Metrics}
Let $\mathcal{V}$ denote the model vocabulary and let $\mathcal{P}$ denote the flattened
set of valid (non-padding) token positions across all $N$ sentences in the corpus. For
each $t \in \mathcal{P}$, let $\ell^R_t, \ell^C_t \in \mathbb{R}^{|\mathcal{V}|}$ denote
the next-token logit vectors produced by $R$ and $C$ respectively, with
$P^R_t := \mathrm{softmax}(\ell^R_t)$ and $P^C_t$ defined analogously. We write
$\mathcal{T}^R_t(k) \subseteq \mathcal{V}$ for the indices of the top-$k$ entries of
$\ell^R_t$ (and $\mathcal{T}^C_t(k)$ analogously), abbreviating
$\mathcal{T}^R_t := \mathcal{T}^R_t(K)$ and $\mathcal{T}^C_t := \mathcal{T}^C_t(K)$
for the full stored slice with $K=1000$. Each metric below is computed per-position and then averaged
(or, for $\Delta_{\max}$, maximized) over $\mathcal{P}$.

\emph{KL divergence.} The Kullback--Leibler (KL) divergence
$D_{\mathrm{KL}}(P^R_t \,\Vert\, P^C_t) = \sum_{v \in \mathcal{V}} P^R_t(v) \log \frac{P^R_t(v)}{P^C_t(v)}$
measures how much information is lost by using $C$'s distribution to approximate $R$'s,
and is a standard distance metric for evaluating compressed language
models~\citep{dutta_accuracy_2024}. We approximate the sum over $\mathcal{T}^R_t$,
renormalizing both distributions on that
support and substituting the smallest logit in $\mathcal{T}^C_t$ as a floor for tokens
in $\mathcal{T}^R_t \setminus \mathcal{T}^C_t$.

\emph{Top-$k$ overlap.} The top-$k$ overlap at position $t$ is the indicator
$\left[\mathcal{T}^R_t(k) = \mathcal{T}^C_t(k)\right]$,
which is $1$ when the unordered top-$k$ token sets coincide and $0$ otherwise. We report
the mean overlap over $\mathcal{P}$ for $k \in \{1, 5, 10\}$.

\emph{Top-$k$ Jaccard similarity.} For larger $k$, we report
the per-position Jaccard similarity
$J_t(k) = \frac{\left| \mathcal{T}^R_t(k) \cap \mathcal{T}^C_t(k) \right|}
{\left| \mathcal{T}^R_t(k) \cup \mathcal{T}^C_t(k) \right|}$,
averaged over $\mathcal{P}$, for $k \in \{100, 1000\}$, which measures fractional
set agreement and retains discriminative power at large $k$.

\emph{Max logit difference.} KL divergence weighs each token by $P^R(v)$, so large logit shifts on
low-probability tokens contribute negligibly even when they reflect substantial
discrepancy at the level of individual tokens~\citep{dettmers_llm_int8_2022, xiao_smoothquant_2023, bondarenko_quantizable_2023}.
We therefore report the per-position max logit difference
$\Delta_t = \max_{v \in \mathcal{T}^R_t} \left| \ell^R_t(v) - \ell^C_t(v) \right|$
(using the same top-$k$ floor for absent tokens as in the KL calculation), and report
its maximum $\Delta_{\max} = \max_{t \in \mathcal{P}} \Delta_t$ over $\mathcal{P}$.
This metric tells us the amount of
numerical drift but may not be reflective of a change in model performance, since
softmax is invariant to a scalar value added to or subtracted from all elements of the logit vector.

\paragraph{Empirical divergence of permuted models}

In order to characterize the empirical difference in outputs between models after the application of permutation,
we compare each permuted candidate against its unmodified reference under the
inference protocol described above, at the models' native \texttt{bf16} precision, across
five independent trials with different random permutations.
For each trial, we apply either the permutation in
Theorem~\ref{thm:transformer-invariance} to the token embeddings and residual stream
(\texttt{embeddings}), the inner-dimension permutation of
Proposition~\ref{prop:gatedmlp-inner} to the gated MLP blocks (\texttt{mlps}), or both
jointly (\texttt{both}).

We report results for TinyLlama-1.1B-Chat-v1.0
(Table~\ref{tab:perm-trials-tinyllama}), Mistral-7B-Instruct-v0.3
(Table~\ref{tab:perm-trials-mistral}) and GPT-OSS-20B (Table~\ref{tab:perm-trials-gpt_oss}); the corresponding results for
Codestral-22B (Table~\ref{tab:perm-trials-codestral}) and
Meta-Llama-3-8B-Instruct (Table~\ref{tab:perm-trials-llama3}) are deferred to
Appendix~\ref{app:additional-results}.
For each statistic we report both the mean
across the five trials and the worst-case trial value, where ``worst'' is the trial
value furthest in the unfavorable direction for that statistic (as indicated by the
arrows in the table headers).

Across all models the divergence is generally largest when both component groups are permuted
(\texttt{both}), but on every metric the difference between the average and worst case
across the five trials is tight, which indicates that the observed divergence does not depend
strongly on the particular permutation chosen.

\begin{table}[ht]
\centering
\small
\begin{tabular}{l cc cc cc}
\toprule
 & \multicolumn{2}{c}{\texttt{embeddings}} & \multicolumn{2}{c}{\texttt{mlps}} & \multicolumn{2}{c}{\texttt{both}} \\
\cmidrule(lr){2-3} \cmidrule(lr){4-5} \cmidrule(lr){6-7}
 & mean & worst & mean & worst & mean & worst \\
\midrule
KL divergence ($\times 10^{-3}$)~$\downarrow$ & $0.555$ & $0.576$ & $0.539$ & $0.541$ & $0.563$ & $0.580$ \\
Top-$1$ overlap (\%)~$\uparrow$ & $98.71$ & $98.61$ & $98.67$ & $98.64$ & $98.75$ & $98.69$ \\
Top-$5$ overlap (\%)~$\uparrow$ & $93.02$ & $92.75$ & $93.04$ & $92.92$ & $93.04$ & $92.80$ \\
Top-$10$ overlap (\%)~$\uparrow$ & $87.26$ & $86.92$ & $87.46$ & $87.35$ & $87.22$ & $86.88$ \\
Top-$100$ Jaccard (\%)~$\uparrow$ & $97.94$ & $97.94$ & $97.96$ & $97.94$ & $97.94$ & $97.92$ \\
Top-$1000$ Jaccard (\%)~$\uparrow$ & $98.43$ & $98.42$ & $98.44$ & $98.43$ & $98.43$ & $98.42$ \\
$\Delta_{\max}$ (logits)~$\downarrow$ & $1.419$ & $1.438$ & $1.625$ & $1.625$ & $1.497$ & $1.594$ \\
\bottomrule
\end{tabular}
\caption{Per-component permutation divergence for TinyLlama-1.1B-Chat-v1.0, under the protocol
described in the text. Arrows indicate the favorable direction for each statistic:
$\uparrow$ ($\downarrow$) means higher (lower) is better.}
\label{tab:perm-trials-tinyllama}
\end{table}

\begin{table}[ht]
\centering
\small
\begin{tabular}{l cc cc cc}
\toprule
 & \multicolumn{2}{c}{\texttt{embeddings}} & \multicolumn{2}{c}{\texttt{mlps}} & \multicolumn{2}{c}{\texttt{both}} \\
\cmidrule(lr){2-3} \cmidrule(lr){4-5} \cmidrule(lr){6-7}
 & mean & worst & mean & worst & mean & worst \\
\midrule
KL divergence ($\times 10^{-3}$)~$\downarrow$             & $0.751$  & $0.921$  & $1.198$  & $1.782$  & $1.763$  & $1.935$  \\
Top-$1$ overlap (\%)~$\uparrow$                           & $98.79$  & $98.75$  & $98.71$  & $98.61$  & $98.67$  & $98.61$  \\
Top-$5$ overlap (\%)~$\uparrow$                           & $93.66$  & $93.49$  & $93.64$  & $93.26$  & $93.22$  & $93.07$  \\
Top-$10$ overlap (\%)~$\uparrow$                          & $88.96$  & $88.73$  & $88.89$  & $88.58$  & $88.26$  & $88.11$  \\
Top-$100$ Jaccard (\%)~$\uparrow$                         & $98.18$  & $98.17$  & $98.19$  & $98.14$  & $98.10$  & $98.08$  \\
Top-$1000$ Jaccard (\%)~$\uparrow$                        & $98.58$  & $98.57$  & $98.58$  & $98.55$  & $98.51$  & $98.50$  \\
$\Delta_{\max}$ (logits)~$\downarrow$                     & $2.644$  & $3.188$  & $3.409$  & $4.438$  & $4.347$  & $5.750$  \\
\bottomrule
\end{tabular}
\caption{Per-component permutation divergence for Mistral-7B-Instruct-v0.3, under the protocol
described in the text. Arrows indicate the favorable direction for each statistic:
$\uparrow$ ($\downarrow$) means higher (lower) is better.}
\label{tab:perm-trials-mistral}
\end{table}

\begin{table}[t]
\centering
\small
\begin{tabular}{l cc cc cc cc}
\toprule
 & \multicolumn{2}{c}{\texttt{embeddings}} & \multicolumn{2}{c}{\texttt{mlps}} & \multicolumn{2}{c}{\texttt{attention}} & \multicolumn{2}{c}{\texttt{all}} \\
\cmidrule(lr){2-3} \cmidrule(lr){4-5} \cmidrule(lr){6-7} \cmidrule(lr){8-9}
 & mean & worst & mean & worst & mean & worst & mean & worst \\
\midrule
KL divergence ($\times 10^{-3}$)~$\downarrow$ & $2.668$ & $2.755$ & $2.102$ & $2.175$ & $2.378$ & $2.447$ & $2.934$ & $3.106$ \\
Top-$1$ overlap (\%)~$\uparrow$ & $97.23$ & $97.16$ & $97.56$ & $97.48$ & $97.38$ & $97.25$ & $97.06$ & $96.88$ \\
Top-$5$ overlap (\%)~$\uparrow$ & $86.00$ & $85.70$ & $87.33$ & $87.22$ & $86.72$ & $86.45$ & $85.14$ & $84.77$ \\
Top-$10$ overlap (\%)~$\uparrow$ & $74.80$ & $74.57$ & $77.18$ & $76.96$ & $75.83$ & $75.53$ & $73.59$ & $73.34$ \\
Top-$100$ Jaccard (\%)~$\uparrow$ & $95.77$ & $95.72$ & $96.25$ & $96.22$ & $96.00$ & $95.97$ & $95.55$ & $95.44$ \\
Top-$1000$ Jaccard (\%)~$\uparrow$ & $96.34$ & $96.29$ & $96.79$ & $96.74$ & $96.55$ & $96.51$ & $96.13$ & $96.03$ \\
$\Delta_{\max}$ (logits)~$\downarrow$ & $4.250$ & $4.812$ & $3.619$ & $3.844$ & $3.906$ & $4.312$ & $4.050$ & $4.500$ \\
\bottomrule
\end{tabular}
\caption{Permutation-invariance trials for \texttt{openai/gpt-oss-20b} (gpt\_oss), 5 seeds. $\uparrow$ ($\downarrow$) means higher (lower) is better.}
\label{tab:perm-trials-gpt_oss}
\end{table}

\paragraph{Full Payload Neutralization}

To verify that our payload calculations are accurate, we perform an end-to-end test
of PermaNet, encoding a payload using a model's full capacity, measuring any degradation
in performance, neutralizing the model with our full-neutralization tool then confirming
that the payload is indeed no longer extractable.

For this test, we use Mistral 7B, and encode an animated GIF from Wikimedia
Commons\footnote{\url{https://commons.wikimedia.org/wiki/File:8-cell-simple.gif}} ($703{,}718$ bytes), padded with a $10{,}752$-byte comment to reach
the model's full $714{,}470$-byte capacity (Table~\ref{tab:capacity}). As expected, we are able to retrieve the image file
perfectly from a safetensors file with the permutation applied, and once the neutralization
is applied, the payload is no longer present.

In Table~\ref{tab:embed-neutralise-mistral}, we show the degradation in performance for both the infected and neutralized
models, showing the infected model has performance in the range shown for random permutations
as measured in Table~\ref{tab:perm-trials-mistral}, as does the neutralized model. Since the neutralized model is
achieved by derangement of the infected model, it is not necessarily a derangement of
the original reference model, though we would expect any difference in performance to
be negligible on average.

This method is dependent upon the rows/columns of the matrices used to encode the
information being unique. Though this is not guaranteed mathematically, in practice the
probability of duplicated rows/columns is negligible.

\begin{table}[t]
\centering
\small
\begin{tabular}{l c c}
\toprule
 & \texttt{embedded} & \texttt{neutralized (full)} \\
\midrule
KL divergence ($\times 10^{-3}$)~$\downarrow$ & $1.911$ & $1.570$ \\
Top-$1$ overlap (\%)~$\uparrow$ & $98.70$ & $98.75$ \\
Top-$5$ overlap (\%)~$\uparrow$ & $92.81$ & $93.58$ \\
Top-$10$ overlap (\%)~$\uparrow$ & $88.24$ & $88.51$ \\
Top-$100$ Jaccard (\%)~$\uparrow$ & $98.11$ & $98.12$ \\
Top-$1000$ Jaccard (\%)~$\uparrow$ & $98.50$ & $98.51$ \\
$\Delta_{\max}$ (logits)~$\downarrow$ & $4.125$ & $4.062$ \\
\bottomrule
\end{tabular}
\caption{Embed-and-neutralize metrics for \texttt{mistralai/Mistral-7B-Instruct-v0.3} (mistral). $\uparrow$ ($\downarrow$) means higher (lower) is better. \emph{Embedded}: permutation encoding the GIF payload; \emph{Neutralized (full)}: full-scope derangement applied post-embedding.}
\label{tab:embed-neutralise-mistral}
\end{table}

\subsection{Comparison to Quantization}

We compare three quantization baselines against our permutation on a \texttt{bfloat16}
TinyLlama-1.1B-Chat~\citep{zhang_tinyllama_2024} reference model,
summarized in Table~\ref{tab:perf-degradation}.

The \texttt{q8\_0} and \texttt{q6\_k} columns compare the native \texttt{bfloat16}
reference against GGUF Q8\_0 and Q6\_K quantized candidates of the same
checkpoint\footnote[1]{Taken from TheBloke/TinyLlama-1.1B-Chat-v1.0-GGUF}, and \texttt{int8}
against an LLM.int8() candidate.
The \texttt{permute} column reports the worst of five random embedding and MLP permutations
(the \texttt{both} regime of Table~\ref{tab:perm-trials-tinyllama}); both reference and
candidate are run at the native \texttt{bf16} precision, so its divergence can be attributed
to the permutation itself.

On TinyLlama, permutation produces measurable but consistently
smaller divergence than every quantization baseline, even taking its worst trial of the five.
The \texttt{permute} column registers a KL divergence of $\approx 5.8 \times 10^{-4}$, retains
top-$1$ token agreement at $98.69\%$ of positions, and exhibits a worst-case logit deviation of
$\Delta_{\max} \approx 1.6$. By comparison, Q8\_0 produces a KL
divergence more than three times larger ($2.094 \times 10^{-3}$) and more than double the
worst-case logit displacement ($3.875$), while Q6\_K and int8 are worse still by one to two
orders of magnitude. Permutation therefore sits below all three quantization schemes on every
metric we report.

We note that the substantial $\Delta_{\max}$ values for Q6\_K ($9.062$
on TinyLlama, and $14.688$ on Mistral as reported below) are consistent with the well-documented
phenomenon of outlier features in transformers, in which a small fraction of activations
or logits dominate model behavior and are correspondingly the most sensitive to coarse
weight quantization~\citep{dettmers_llm_int8_2022, xiao_smoothquant_2023, bondarenko_quantizable_2023}.

\begin{table}[ht]
\centering
\small
\begin{tabular}{l c c c c}
\toprule
 & \texttt{Q8\_0} & \texttt{Q6\_K} & \texttt{int8} & \texttt{permute} \\
\midrule
KL divergence ($\times 10^{-3}$)~$\downarrow$ & $2.094$ & $21.338$ & $16.346$ & $0.580$ \\
Top-$1$ overlap (\%)~$\uparrow$ & $97.60$ & $94.60$ & $93.22$ & $98.69$ \\
Top-$5$ overlap (\%)~$\uparrow$ & $89.02$ & $75.98$ & $67.73$ & $92.80$ \\
Top-$10$ overlap (\%)~$\uparrow$ & $79.75$ & $60.41$ & $47.12$ & $86.88$ \\
Top-$100$ Jaccard (\%)~$\uparrow$ & $96.49$ & $91.46$ & $88.50$ & $97.92$ \\
Top-$1000$ Jaccard (\%)~$\uparrow$ & $97.09$ & $92.68$ & $89.92$ & $98.42$ \\
$\Delta_{\max}$ (logits)~$\downarrow$ & $3.875$ & $9.062$ & $6.469$ & $1.594$ \\
\bottomrule
\end{tabular}
\caption{Distributional divergence between reference and candidate TinyLlama-1.1B-Chat
models on $500$ WikiText-2 test sentences ($K = 1000$). The reference model is
at \texttt{bf16} precision. Top-$k$ overlap is the
percentage of valid positions at which the unordered top-$k$ token sets are
identical; at $k \in \{100, 1000\}$ we instead report the mean top-$k$ Jaccard
similarity (see the metrics outline above), since the overlap indicator
degenerates to near-zero at those values of $k$. The \texttt{Q8\_0}, \texttt{Q6\_K} and
\texttt{int8} candidates are quantizations of the TinyLlama reference model and serve as
utility-preserving reference points; the \texttt{permute} column reports the worst of five
random embedding-and-MLP (\texttt{both}) permutations from
Table~\ref{tab:perm-trials-tinyllama}, evaluated against the same \texttt{bf16} reference.}
\label{tab:perf-degradation}
\end{table}

To validate these results in a larger-scale model,
we repeat the comparison on a \texttt{bfloat16}
Mistral-7B-Instruct-v0.3 reference model using the same inference protocol
($N = 500$ WikiText-2 sentences, $K = 1000$, seed $42$). Our results are summarized in
Table~\ref{tab:perf-degradation-mistral}.
The \texttt{q8\_0} and \texttt{q6\_k} candidates are GGUF
quantizations\footnote[2]{MaziyarPanahi/Mistral-7B-Instruct-v0.3-GGUF} of the
native \texttt{mistralai/Mistral-7B-Instruct-v0.3} reference, \texttt{int8} is an LLM.int8()
candidate, and \texttt{permute} is again the worst of five random \texttt{both}-regime
permutations (Table~\ref{tab:perm-trials-mistral}).

At the $7$B scale the picture is more nuanced than for TinyLlama. Permutation remains far
below the Q6\_K and int8 baselines on almost every metric, and retains higher top-$k$ token
agreement than even Q8\_0 ($98.61\%$ vs $98.43\%$ at top-$1$). On the distributional
metrics, the worst-case permutation is comparable to rather than below Q8\_0: its KL
divergence ($1.935 \times 10^{-3}$) and worst-case logit deviation ($\Delta_{\max} = 5.750$)
sit above Q8\_0's ($1.031 \times 10^{-3}$ and $2.750$), while the KL divergence
remains an order of magnitude or more below Q6\_K's. Permutation is thus of the same order as the mildest
quantization in routine use, and significantly milder than the coarser schemes.

\begin{table}[ht]
\centering
\small
\begin{tabular}{l c c c c}
\toprule
 & \texttt{Q8\_0} & \texttt{Q6\_K} & \texttt{int8} & \texttt{permute} \\
\midrule
KL divergence ($\times 10^{-3}$)~$\downarrow$ & $1.031$ & $102.814$ & $9.438$ & $1.935$ \\
Top-$1$ overlap (\%)~$\uparrow$ & $98.43$ & $95.11$ & $94.93$ & $98.61$ \\
Top-$5$ overlap (\%)~$\uparrow$ & $90.65$ & $81.07$ & $73.79$ & $93.07$ \\
Top-$10$ overlap (\%)~$\uparrow$ & $84.21$ & $67.13$ & $55.91$ & $88.11$ \\
Top-$100$ Jaccard (\%)~$\uparrow$ & $97.33$ & $93.63$ & $91.23$ & $98.08$ \\
Top-$1000$ Jaccard (\%)~$\uparrow$ & $97.78$ & $94.24$ & $92.30$ & $98.50$ \\
$\Delta_{\max}$ (logits)~$\downarrow$ & $2.750$ & $14.688$ & $6.367$ & $5.750$ \\
\bottomrule
\end{tabular}
\caption{Distributional divergence for Mistral-7B-Instruct-v0.3 under the same protocol
as Table~\ref{tab:perf-degradation} ($N = 500$, $K = 1000$). The reference model is
at \texttt{bf16} precision. The \texttt{Q8\_0} and \texttt{Q6\_K} candidates are taken from
MaziyarPanahi/Mistral-7B-Instruct-v0.3-GGUF and \texttt{int8} is an LLM.int8() candidate.
The \texttt{permute} column reports the worst of five random embedding-and-MLP (\texttt{both})
permutations from Table~\ref{tab:perm-trials-mistral}, at Mistral's native \texttt{bf16} precision.}
\label{tab:perf-degradation-mistral}
\end{table}

\paragraph{Summary}
Across both the $1.1$B and $7$B scales, permutation produces distributional divergence far
below the Q6\_K and int8 baselines and on the same order as the mildest scheme, Q8\_0, below
it on every metric for TinyLlama, and comparable to it for Mistral. Our permutation-invariance trials
extend this to larger models: Codestral-22B (Table~\ref{tab:perm-trials-codestral}) and the
$20$B mixture-of-experts \texttt{gpt-oss-20b} (Table~\ref{tab:perm-trials-gpt_oss}). In both,
every permutation regime leaves the output distribution essentially unchanged.

\subsection{Computation at Higher Precisions}

We claim that the divergence in model behavior from the reference model is consistent with
the accumulation of numerical error. We verify that this is the case by examining
the effect of performing the same experiment at higher precisions.
At higher precisions, we would expect less accumulation of numerical error and therefore the divergence would be lower.
For the TinyLlama model, we conduct the same experiment again, but loading the weights
at \emph{float32} and \emph{float64} precisions, comparing the result to performing the
experiment at the standard precision of \emph{bfloat16}. The results of this experiment
are shown in Table~\ref{tab:precision-embeddings_and_mlps}.

\begin{table}[t]
\centering
\small
\begin{tabular}{l cc cc cc}
\toprule
 & \multicolumn{2}{c}{\texttt{bf16}} & \multicolumn{2}{c}{\texttt{f32}} & \multicolumn{2}{c}{\texttt{f64}} \\
\cmidrule(lr){2-3} \cmidrule(lr){4-5} \cmidrule(lr){6-7}
 & mean & worst & mean & worst & mean & worst \\
\midrule
KL divergence~$\downarrow$ & $4.96\times10^{-4}$ & $5.33\times10^{-4}$ & $5.97\times10^{-11}$ & $6.22\times10^{-11}$ & $1.63\times10^{-13}$ & $1.90\times10^{-13}$ \\
Top-$1$ overlap (\%)~$\uparrow$ & $98.82$ & $98.76$ & $100.00$ & $100.00$ & $100.00$ & $100.00$ \\
Top-$5$ overlap (\%)~$\uparrow$ & $93.02$ & $92.91$ & $100.00$ & $99.99$ & $100.00$ & $100.00$ \\
Top-$10$ overlap (\%)~$\uparrow$ & $87.80$ & $87.49$ & $100.00$ & $100.00$ & $100.00$ & $100.00$ \\
Top-$100$ Jaccard (\%)~$\uparrow$ & $98.00$ & $98.00$ & $100.00$ & $100.00$ & $100.00$ & $100.00$ \\
Top-$1000$ Jaccard (\%)~$\uparrow$ & $98.49$ & $98.48$ & $100.00$ & $100.00$ & $100.00$ & $100.00$ \\
$\Delta_{\max}$ (logits)~$\downarrow$ & $2.31\times10^{0}$ & $2.62\times10^{0}$ & $5.62\times10^{-4}$ & $6.76\times10^{-4}$ & $4.87\times10^{-5}$ & $5.48\times10^{-5}$ \\
\bottomrule
\end{tabular}
\caption{Permutation-invariance (\texttt{both}) for \texttt{TinyLlama/TinyLlama-1.1B-Chat-v1.0} across precisions, 5 seeds. $\uparrow$ ($\downarrow$) means higher (lower) is better.}
\label{tab:precision-embeddings_and_mlps}
\end{table}

The results strongly suggest that the source of error is the fixed-precision calculations. This
benefits the defender, since if they have the resources, they can run the network
at a higher precision after performing the countermeasures to reduce performance
degradation. An attacker using PermaNet does not have this advantage, since they cannot
choose the precision at which inference is performed.

\section{Detectability}
\label{sec:detectability}

There are two criteria that we can consider when we discuss detectability of permutation-based
stegomalware. Firstly, whether malware embedded in the order of rows and columns of weight
matrices can be detected by anti-virus tools. Secondly, whether it is possible to detect that
a model contains an encoded message at all, even if it is not possible to determine
the content of the message.

For static analysis and detection of malware, antivirus software requires the ability
to read the raw bytes of the payload. Malware analysis tools may also employ packer detection tools
to determine if the payload has been compressed or encrypted, obfuscation detection, and
unpacking tools to extract the raw bytes for analysis~\cite{egele_survey_2008}.
Current antivirus and static-analysis tooling, however, does not perform the extraction
pipeline our scheme requires. To recover the payload, a tool would have to interpret the
weights as tensors, apply the correct hash function for the canonical ordering, compute
the rank of the permutation of the rows, and convert that integer into bytes. To our knowledge, 
no general-purpose scanner or anti-malware tool could carry out these operations, so the
payload is not exposed to static analysis using current tools.

For the second question, of whether an observer can tell if a model contains a hidden message,
regardless of the message content, we consider the information
theoretic measures of steganographic security introduced in~\citet{cachin_information-theoretic_2000}.
In this framework, security is measured by the KL-divergence
$D(P_{\text{cover}} \,\Vert\, P_{\text{stego}})$ between the distribution $P_{\text{stego}}$ of
steganographically modified objects and the distribution $P_{\text{cover}}$ of innocent cover
objects. A stegosystem is \emph{perfectly secure} when this KL-divergence is zero, that is,
when the two distributions are identical, and $\epsilon$-secure when it is at most $\epsilon$.

To the best of our knowledge, there is no bias toward any particular
ordering of rows or columns in a network that has not been tampered with, so we take the cover distribution
to be uniform over the $n!$ possible orderings. Therefore, the question of detectability
reduces to how close the distribution over orderings emitted by an adversary encoding
stegomalware is to this uniform distribution. For this analysis, we assume
that the warden (i.e., the user trying to discover steganographic messages)
does not have access to the reference model, since in that case they
could trivially detect the permutation by direct comparison.

If the hashing function used to determine the canonical order of rows is public or
easily guessable, a warden can recompute the canonical ordering from the delivered
weights, undo it, and rank the residual permutation to recover the encoded integer $m$.
Even if the encoded bytes are whitened to appear pseudorandom, the method still leaks
information about the existence of a payload, due to a
\emph{support gap}: taking the capacity to be $\lfloor \log_2 n! \rfloor$ bits confines $m$
to $[0, 2^{\lfloor \log_2 n! \rfloor})$, so the ranks in
$[2^{\lfloor \log_2 n! \rfloor}, n!)$ are never achievable in our encoding scheme.
Any observed ordering whose rank falls in this gap therefore cannot have been produced by
our encoder, but does occur in benign orderings.
This means the steganographic and cover distributions differ in their support: any model whose rank
falls in the gap is guaranteed to be benign, so a warden can certify such a model as untampered
with no chance of error. Rejecting all models
whose ranks lie below the gap, whether benign or not, is enough to render the system
perfectly \emph{insecure} in Cachin's framework, i.e., it is not $\epsilon$ secure for any finite choice
of $\epsilon$.
If the warden has knowledge
of the hashing function being used, they can insist on only being sent versions of the
model which fall in the support gap, and therefore cannot contain meaningful data, refusing
benign models which fall in the support of $P_{\text{stego}}$ but removing all risk
of accepting malicious ones.

However, if the defender does not know the exact hashing function used, they are not
able to rely on rejecting message-carrying permutations as a defense. If a keyed hashing
function is used with a sufficiently large key (such as HMAC~\cite{bellare2015new} on the multiset of row entries),
without knowledge of that key, all
orderings once again become equally likely, giving a perfectly secure system by Cachin's framework.

Finally, though the degradation in performance is small, it is possible that this drop in
performance can be used to detect this kind of tampering: the original permutation is
likely the optimal one, while the performance with a permutation applied is likely to
be distributed as a random sample across the performances for all permutations. A heuristic
or statistical test might then flag a model whose performance is a typical sample from this
distribution rather than the optimum. We do not pursue whether such a test is practical,
since a defender need not rely on it: applying the neutralizing permutation of
Section~\ref{sec:countermeasures} removes any payload whether or not its presence could be
detected this way.

In summary, a defender cannot rely on detection. Static analysis does not recover the
payload; the support gap reveals a payload only against an attacker who uses a public or
guessable hash, and a keyed hash makes the system perfectly secure in Cachin's sense. The most
dependable response is therefore neutralization: re-permuting a model to a random
order (Section~\ref{sec:countermeasures}) destroys any payload at a negligible cost to performance.

\section{Related Work}
\label{sec:related}

The ability to encode malware into the weights of a neural network was first demonstrated
by~\citet{liu_stegonet_2020}. In this paper, they proposed multiple techniques by which
a payload could be encoded in and extracted from a deep neural network:
LSB substitution, resilience training,
value mapping and sign mapping. In resilience training, fixed locations in the network
are chosen to encode the bits from the payload and the network is re-trained while
fixing these values. In value mapping and sign mapping, the extraction script contains
a list of locations where the bits to reconstruct the payload can be found.
In this case, the extraction script is required to be payload specific.

In~\citet{wang_evilmodel_2022}, variations on LSB substitution are suggested, allowing
a greater number of bytes to be substituted in MSB reservation and half substitution,
as well as proposing fast substitution, where stegomalware can be embedded without deconstructing
existing parameters at the byte level, allowing for faster encoding.

These techniques were improved upon in~\citet{hitaj_maleficnet_2022}, where the authors introduce MaleficNet, a method of encoding
stegomalware into deep neural networks which makes use of spread spectrum encoding and
error correction codes in order to minimize the effect of embedding a payload on model
performance, while making the resulting stegomalware more robust to fine-tuning and
model pruning. This robustness is a benefit of MaleficNet over our method, though it comes
at the cost of changing the real-valued function computed by MaleficNet. For the
attack vector we consider (the adversary providing the model directly), this is not a
concern, but making PermaNet more robust to fine-tuning and quantization may be a promising
direction of future work.

Recently, two papers have suggested the use of permutation symmetry as a method to
neutralize stegomalware in deep neural networks
\citep{gilkarov_neuperm_2025, torpmann-hagen_defending_2025} with the former showing application
to LLMs. Our neutralization method follows the same principles as these papers, though
targeting LLM architectures specifically in a way that ensures full coverage, as well
as highlighting its necessity through our PermaNet attack.

The main invariance used for neutralization and for the PermaNet attack was suggested
in~\citet{fernandez_functional_2024} as a tool for watermarking models. In that work,
no semantic meaning was given to the permutations, nor was the invariance rigorously proven.
In~\citet{ainsworth_git_2022},
permutation symmetries of neural networks are investigated from a different angle,
arguing that in many situations, different initializations
of the same neural network will typically find the same set of solutions under stochastic
gradient descent, up to permutation symmetry.

In terms of the idea of using permutations to hide information,
the closest related work we could find is~\citet{camenisch_steganographic_2007}, which
investigates using the ability to manipulate the order of streaming TCP packets
to hide semantic information.
This idea was expanded in a tutorial by~\citet{montanez_permutation_2021}, which applied
the idea to ordering textual lists. While the scenario in the former is different
enough to not be directly comparable, the latter uses an encoding scheme which is of order
$\mathcal{O}(n^2)$.

\section{Discussion}
\label{sec:discussion}

A weakness of PermaNet is that the extraction process is rather involved,
although this is also true of MaleficNet and to a lesser extent StegoNet and EvilModel.
However, the extraction code does not have to be packaged with the model; it could arrive via
a different delivery mechanism \cite{lin_stegomalware_2015}, where malicious code would be flagged but a script
of mainly tensor manipulations would garner less suspicion.

In StegoNet's sign-mapping and value-mapping approaches, it is possible to not change
the values of any of the weights in the network, but the steganographic decoder
needs to know the location of the indices to extract the signs/values from, and this
will vary depending on the payload, making the decoder \emph{payload} specific \cite{liu_stegonet_2020}. Meanwhile,
MaleficNet and LSB variants do not require payload specific information, but instead
change the mapping of inputs to outputs in the network, even at high precision \cite{hitaj_maleficnet_2022, wang_evilmodel_2022}.
By contrast, PermaNet only changes the network behavior at a level commensurate with
network precision, and the decoder is payload agnostic.
Because the payload is never present as bytes until decoded, no static
scan of the serialized file, whether pickle opcodes, embedded strings, or raw weight bytes
exposes the payload. 

Our approach is not robust to pruning or quantization, due to its reliance on
hashing functions or ordering schemes which would not survive these processes.
We do not consider this to be a major problem for the attack vectors we consider,
since typically users will download models to use as is. It does however reduce the
potential spread of the malware, since quantizations are an increasingly common modification
of models for practical applications on constrained hardware. Despite this, there
could be other more locality-sensitive hashing methods which make the method more robust.
For quantization in particular, we suggest that the structure inherent in quantization
means it may be possible to create a hashing function specifically designed for this
purpose. Furthermore, it may be possible to find a permutation ranking scheme which is 
more robust to errors, though we leave both these avenues to future research.

Defensively, a drawback of our approach is that developing a neutralization tool for a novel
model architecture requires explicit knowledge of the model's architecture.
Nevertheless, automatic detection of these invariances should be possible for a
wide range of models through analysis of their computational graphs.

Use of the neutralization tool is dependent upon the idea that we can safely load the
parameters of a matrix without loading any associated code. This can be done through
the use of a sandbox environment, by separating weight files from malicious files,
or by static modification of files.

There are other invariants that can be used in order to encode more information
into the model, such as those suggested in \citet{fernandez_functional_2024}. We also leave
these to future work.

\section{Conclusion}
\label{sec:conclusion}

In this paper we studied behavior-preserving permutation symmetries in large language
models from both an offensive and a defensive perspective. We showed how these symmetries
can be exploited to embed a hidden payload in a model's weights in a manner that is
theoretically lossless, requires no retraining, and needs no extraction
script specific to the payload. We also showed that the same symmetries provide a countermeasure: applying random
derangements across sets of parameters neutralizes known stegomalware across all
parameters of a model, improving on prior work. Finally, we quantified the performance
impact of these transformations and found it to be minimal. Together these results
highlight that permutation symmetry serves both attackers and defenders, and that defenders should
apply it pre-emptively to models obtained from untrusted sources.

\section*{Acknowledgements}

This work was supported by the Laboratory for AI Security Research (LASR). The views expressed in this paper are those of the authors and do not necessarily reflect the position of LASR or His Majesty's Government.

The authors would like to thank Kate S, Vicky H, and Graham C for their support, feedback and
many useful conversations which helped shape this paper. We would also like to thank
Chris Norman and Yaz Ibrahim for their contributions during the early stages of this
project, and Tom Morgan and Zoe M for their feedback on earlier drafts.

\bibliographystyle{plainnat}
\bibliography{references}

\appendix

\section{Proofs}
\label{app:proofs}

This appendix collects the proofs of the results stated in the main text.

\propEncodableBits*
\begin{proof}
   For the second inequality, we will use an approximation involving the Gamma function, then use the fact that $\Gamma(n)= (n-1)!$
  From~\citet{abramowitz_handbook_1948}, we get that for the natural log
  of the Gamma function:
  \begin{align*}
    \ln \Gamma(n) \approx (n-\frac{1}{2}) \ln n - n + \frac{1}{2} \ln 2 \pi + \frac{1}{12n} - \frac{1}{360n^3} + \frac{1}{1260n^5}\ldots ,
  \end{align*}
  where the error in the approximation is less than the absolute value of the first
  neglected term and of the same sign. From this, using the fact that $\ln n! = \ln (n-1)! + \ln n$
  we are able to derive our inequality:
  \begin{align*}
    \ln n! &= \ln (n-1)! + \ln n = \ln \Gamma (n) + \ln n \\
    &> (n-\frac{1}{2}) \ln n - n + \frac{1}{2} \ln 2 \pi + \frac{1}{12n} - \frac{1}{360n^3} + \ln n\\
    & = n \ln n + \frac{1}{2} \ln n- n + \frac{1}{2} \ln 2 \pi + \frac{1}{12n} - \frac{1}{360n^3}\\
    & = n \ln n - n + \frac{1}{2} \ln 2 \pi n + \frac{1}{12n} - \frac{1}{360n^3}.
  \end{align*}
  Then, using the change of base formula for logarithms and applying the floor function, we get
  \begin{align*}
    \log_2 n! &> \frac{n \ln n - n + \frac{1}{2} \ln 2 \pi n + \frac{1}{12n} - \frac{1}{360n^3}}{\ln 2}\\
    &\geq \left\lfloor \frac{n \ln n - n + \frac{1}{2} \ln 2 \pi n + \frac{1}{12n} - \frac{1}{360n^3}}{\ln 2} \right\rfloor.
  \end{align*}
  For the first inequality, we use
  \begin{align*}
    \ln \Gamma(n) < (n-\tfrac{1}{2}) \ln n - n + \tfrac{1}{2} \ln 2 \pi + \tfrac{1}{12n} - \tfrac{1}{360 n^3} + \tfrac{1}{1260 n^5}.
  \end{align*}
  Adding $\ln n$ and dividing by $\ln 2$ as before gives that for all $n\geq 1$,
  \begin{align*}
    \log_2 n! &< \frac{n \ln n - n + \frac{1}{2} \ln 2 \pi n + \frac{1}{12n} - \frac{1}{360 n^3} + \frac{1}{1260 n^5}}{\ln 2}\\
    &= \frac{n \ln n - n + \frac{1}{2} \ln 2 \pi n + \frac{1}{12n} - \frac{1}{360 n^3}}{\ln 2} + \frac{1}{1260 n^5 \ln 2}\\
    &< \frac{n \ln n - n + \frac{1}{2} \ln 2 \pi n + \frac{1}{12n} - \frac{1}{360 n^3}}{\ln 2} + 1.
  \end{align*}
  Moving the one to the other side, taking the floor and moving the one outside gives:
  \begin{align*}
    \lfloor \log_2 n! \rfloor - 1 &\leq \left\lfloor \frac{n \ln n - n + \frac{1}{2} \ln 2 \pi n + \frac{1}{12n} - \frac{1}{360 n^3}}{\ln 2} \right\rfloor.
  \end{align*}
\end{proof}

\propMHA*
\begin{proof}
  We first show that for attention heads using $\Theta'$ with permuted inputs $XP$ is equivalent to using
  the unpermuted inputs and original parameters
  \begin{align*}
    \textnormal{Att}(XP;\, \theta'_{\text{att}}) &= \textnormal{Softmax}\!\left(\frac{XP P^\intercal W_Q W_K^\intercal P P^\intercal X^\intercal}{\sqrt{d_k}}\right) XP P^\intercal W_V \\
    &= \textnormal{Softmax}\!\left(\frac{X W_Q W_K^\intercal X^\intercal}{\sqrt{d_k}}\right) X W_V = \textnormal{Att}(X;\, \theta_{\text{att}}).
  \end{align*}
  Defining $\textnormal{head}_i' = \textnormal{Att}(XP;\, \theta'^{(i)}_{\text{att}})$, this gives us the result
  \begin{align*}
    \textnormal{MHA}(XP;\, \theta'_{\text{att}}) &= \textnormal{Concat}(\textnormal{head}_1', \ldots, \textnormal{head}_h')\, W_O' \\
    &= \textnormal{Concat}(\textnormal{head}_1, \ldots, \textnormal{head}_h)\, W_O P \\
    &= \textnormal{MHA}(X;\, \theta_{\text{att}})P. \qedhere
  \end{align*}
\end{proof}

\propGatedMLP*
\begin{proof}
  Applying the properties of the permutation matrix and element-wise operations, we
  have:
  \begin{align*}
    \textnormal{MLP}(XP;\, \theta'_{\text{mlp}}) &= \bigl(\sigma(XP W_{\text{gate}}') \odot XP W_{\text{up}}'\bigr) W_{\text{down}}' \\
    &= \bigl(\sigma(XP P^\intercal W_{\text{gate}}) \odot XP P^\intercal W_{\text{up}}\bigr) W_{\text{down}} P \\
    &= \bigl(\sigma(X W_{\text{gate}}) \odot X W_{\text{up}}\bigr) W_{\text{down}} P \\
    &= \textnormal{MLP}(X;\, \theta_{\text{mlp}})P. \qedhere
  \end{align*}
\end{proof}

\propRMSNorm*
\begin{proof}
  Again through properties of the permutation matrix and element-wise operations, we have:
  \begin{align*}
    \textnormal{Ln}(xP;\, \gamma P)
    &= (\gamma P) \odot \frac{xP}{\sqrt{\textnormal{RMS}^2(xP) + \varepsilon}} = (\gamma P) \odot \frac{xP}{\sqrt{\textnormal{RMS}^2(x) + \varepsilon}} \\
    &= \left(\gamma \odot \frac{x}{\sqrt{\textnormal{RMS}^2(x) + \varepsilon}}\right) P = \textnormal{Ln}(x;\, \gamma)\, P. \qedhere
  \end{align*}
\end{proof}

\thmTransformerInvariance*
\begin{proof}
  We can see
  \begin{align*}
    z^{0\prime} &= \mathbf{t} W_E' = \mathbf{t} W_E P = z^0 P
  \end{align*}
  then we can recursively see the effect on layer outputs
  \begin{align*}
    h^{l\prime} &= z^{l\prime} + \textnormal{MHA}(\textnormal{Ln}(z^{l\prime};\, \gamma_{\text{att}}^{l\prime});\, \theta^{l\prime}_{\text{att}}) \\
                &= z^l P + \textnormal{MHA}(\textnormal{Ln}(z^l P;\, \gamma_{\text{att}}^l P);\, \theta^{l\prime}_{\text{att}}) \\
                &= z^l P + \textnormal{MHA}(\textnormal{Ln}(z^l;\, \gamma_{\text{att}}^l)\, P;\, \theta^{l\prime}_{\text{att}}) \\
                &= z^l P + \textnormal{MHA}(\textnormal{Ln}(z^l;\, \gamma_{\text{att}}^l);\, \theta^l_{\text{att}})\, P = h^l P \\[6pt]
    z^{l+1\prime} &= h^{l\prime} + \textnormal{MLP}(\textnormal{Ln}(h^{l\prime};\, \gamma_{\text{mlp}}^{l\prime});\, \theta^{l\prime}_{\text{mlp}}) \\
                  &= h^l P + \textnormal{MLP}(\textnormal{Ln}(h^l P;\, \gamma_{\text{mlp}}^l P);\, \theta^{l\prime}_{\text{mlp}}) \\
                  &= h^l P + \textnormal{MLP}(\textnormal{Ln}(h^l;\, \gamma_{\text{mlp}}^l)\, P;\, \theta^{l\prime}_{\text{mlp}}) \\
                  &= h^l P + \textnormal{MLP}(\textnormal{Ln}(h^l;\, \gamma_{\text{mlp}}^l);\, \theta^l_{\text{mlp}})\, P = z^{l+1} P
  \end{align*}
  and the output
  \begin{align*}
    p &= \textnormal{Softmax}(\textnormal{Ln}(z^{L\prime};\, \gamma_{\text{final}}')\, W_U') \\
      &= \textnormal{Softmax}(\textnormal{Ln}(z^L P;\, \gamma_{\text{final}} P)\, P^\intercal W_U) \\
      &= \textnormal{Softmax}(\textnormal{Ln}(z^L;\, \gamma_{\text{final}})\, P P^\intercal W_U) \\
      &= \textnormal{Softmax}(\textnormal{Ln}(z^L;\, \gamma_{\text{final}})\, W_U).
  \end{align*}
  Hence, the parameters $\Theta$ and $\Theta'$ result in models that are functionally
  equivalent. Furthermore, every parameter matrix in $\Theta'$ is a permuted version
  of a parameter matrix in $\Theta$ and comprises all parameters in the LLM, completing the
  proof.
\end{proof}

\propGatedMLPInner*
\begin{proof}
 Using the properties of permutation matrices and element-wise operations, we get:
  \begin{align*}
    \textnormal{MLP}(X;\, \theta'_{\text{mlp}}) &= \bigl(\sigma(X W_{\text{gate}}') \odot X W_{\text{up}}'\bigr) W_{\text{down}}' \\
    &= \bigl(\sigma(X W_{\text{gate}} Q) \odot X W_{\text{up}} Q\bigr) Q^\intercal W_{\text{down}} \\
    &= \bigl(\sigma(X W_{\text{gate}}) Q \odot X W_{\text{up}} Q\bigr) Q^\intercal W_{\text{down}} \\
    &= \bigl(\sigma(X W_{\text{gate}}) \odot X W_{\text{up}}\bigr) Q Q^\intercal W_{\text{down}} \\
    &= \bigl(\sigma(X W_{\text{gate}}) \odot X W_{\text{up}}\bigr) W_{\text{down}} \\
    &= \textnormal{MLP}(X;\, \theta_{\text{mlp}}). \qedhere
  \end{align*}
\end{proof}

\section{Message-Encoding Permutation}

In this appendix, we provide pseudocode for the process of encoding information into
a set of matrices in a neural network. The process of neutralization is the same,
except that random bits are chosen for the message. This is equivalent to applying
a random permutation.
\label{app:encode-alg}

\begin{algorithm}[ht]
\caption{Encoding a message via weight permutation.}
\label{alg:encode}
\begin{algorithmic}[1]
\Require Target matrix $T$ with permutation dimension $d_T$
\Require List of affected matrix--dimension pairs $\mathcal{M}$
\Require Message $m \in \mathbb{Z}_{\geq 0}$
\Ensure Matrices in $\mathcal{M} \cup \{(T, d_T)\}$ are permuted so that $m$ is recoverable from $T$
\Function{EncodeMessage}{$T,\, d_T,\, \mathcal{M},\, m$}
  \State $n \gets \text{size of } T \text{ along dimension } d_T$
  \State $\sigma_{\text{can}} \gets$ \Call{CanonicalPermutation}{$T, d_T$} \Comment{permutation into canonical order}
  \State $\sigma_m \gets$ \Call{UnrankNonLex}{$n, m$} \Comment{permutation of rank $m$}
  \State $\sigma \gets \sigma_m \circ \sigma_{\text{can}}$
  \ForAll{$(A, d) \in \mathcal{M} \cup \{(T, d_T)\}$}
    \State $A \gets$ permute $A$ along dimension $d$ according to $\sigma$
  \EndFor
\EndFunction
\end{algorithmic}
\end{algorithm}

\begin{algorithm}[ht]
\caption{Decoding a message from a permuted matrix.}
\label{alg:decode}
\begin{algorithmic}[1]
\Require Target matrix $T$ with permutation dimension $d_T$
\Ensure Bit string $\mathbf{b}$ of length $B$ recovered from the ordering of $T$ along $d_T$
\Function{DecodeMessage}{$T,\, d_T$}
  \State $n \gets \text{size of } T \text{ along dimension } d_T$
  \State $B \gets$ \Call{NumEncodableBits}{$n$} \Comment{Stirling-based lower bound on $\lfloor \log_2 n! \rfloor$}
  \State $\sigma_{\text{can}} \gets$ \Call{CanonicalPermutation}{$T, d_T$} \Comment{permutation into canonical order}
  \State $\sigma \gets \sigma_{\text{can}}^{-1}$ \Comment{canonical into received order}
  \State $r \gets$ \Call{RankNonLex}{$\sigma$} \Comment{recovered integer message}
  \State $\mathbf{b} \gets$ binary expansion of $r$, zero-padded on the left to length $B$
  \State \Return $\mathbf{b}$
\EndFunction
\end{algorithmic}
\end{algorithm}

\section{Non-Lexicographical Ranking and Unranking}
\label{app:rank-unrank}

To rank and unrank permutations in linear time we use the non-lexicographical
ranking proposed in~\citep{myrvold_ranking_2001}. Though it is possible to rank and
unrank lexicographic permutations in linear time, the algorithm is slightly more
involved~\citep{arge_linear-time_2007}.
Though we benefited from an implementation of the algorithms from~\citep{myrvold_ranking_2001},
we found that the recursive implementation as presented in the paper led to reaching
the recursion limits in Python for the large values we needed.

Algorithms~\ref{alg:rank} and~\ref{alg:unrank} give the non-recursive procedures
for ranking a permutation and for recovering a permutation from its rank, using
the non-lexicographical ordering of permutations.

These are non-recursive variants of the algorithms proposed in \citep{myrvold_ranking_2001},
making them suitable for implementation in Python even for large permutation sizes.

\begin{algorithm}[ht]
\caption{Rank a permutation (non-lexicographical order).}
\label{alg:rank}
\begin{algorithmic}[1]
\Require Permutation $\pi$ of $\{0, 1, \ldots, n-1\}$ as an array of length $n$
\Ensure Integer rank $r \in \{0, 1, \ldots, n! - 1\}$
\Function{RankNonLex}{$\pi$}
  \State $n \gets \text{length}(\pi)$
  \If{$n = 0$}
    \State \Return $0$
  \EndIf
  \State $p \gets \text{copy of } \pi$
  \State $q \gets$ inverse permutation of $\pi$
  \State $r \gets 0$
  \State $c \gets 1$
  \For{$i \gets n$ \textbf{down to} $1$}
    \State $s \gets p[i - 1]$
    \State $t \gets q[i - 1]$
    \State swap $p[i - 1]$ and $p[t]$
    \State swap $q[i - 1]$ and $q[s]$
    \State $r \gets r + s \cdot c$
    \State $c \gets c \cdot i$
  \EndFor
  \State \Return $r$
\EndFunction
\end{algorithmic}
\end{algorithm}

\begin{algorithm}[ht]
\caption{Unrank a permutation (non-lexicographical order).}
\label{alg:unrank}
\begin{algorithmic}[1]
\Require Number of elements $n \geq 0$, rank $r \in \{0, 1, \ldots, n! - 1\}$
\Ensure Permutation $\pi$ of $\{0, 1, \ldots, n-1\}$ with rank $r$
\Function{UnrankNonLex}{$n, r$}
  \If{$r \geq n!$}
    \State \textbf{error}: rank out of range
  \EndIf
  \State $\pi \gets [0, 1, \ldots, n - 1]$
  \While{$n > 0$}
    \State swap $\pi[n - 1]$ and $\pi[r \bmod n]$
    \State $r \gets \lfloor r / n \rfloor$
    \State $n \gets n - 1$
  \EndWhile
  \State \Return $\pi$
\EndFunction
\end{algorithmic}
\end{algorithm}

\section{GPT-OSS}
\label{app:gpt-oss}

In this appendix, we demonstrate how our method can be augmented to apply to the
GPT-OSS family of models. GPT-OSS replaces the dense gated MLP of 
Theorem~\ref{thm:transformer-invariance} with a
top-$k$ routed mixture-of-experts model. Unusually for a modern architecture, it also employs
bias terms in its self-attention mechanism. The query, key, value, and sink biases are
internally facing (they do not interact with the global embedding space) and so are not
permuted by the standard embedding permutation. The output-projection bias is the exception,
lying in the global embedding space and permuted along with it.

To account for the mixture-of-experts model, we make two adjustments. For neutralization,
we permute the expert order at each layer; this is necessary to allow for permutation of
the router biases. And, whereas permutation of MLP inner dimensions was previously done per layer,
it is now done per expert. This allows for greater encoding capacity. For the grouped query
attention, we perform a derangement on the key-value heads during
neutralization. Since there are only 8 key-value heads per layer, they offer very little
capacity for encoding, so we do not consider them in an offensive capacity.

Of the models we evaluate empirically in this paper, GPT-OSS 20B offers by far the most storage, able to encode
2.78MB of data. Though we do not run any experiments on GPT-OSS 120B, the 120-billion-parameter
variant, it would be possible to store 16.68MB of data in its experts.

\subsection{Experts}
In a mixture-of-experts model, each MLP layer consists of $E \in \mathbb{N}$ experts,
of which only a subset of size $k <E$ are used in a given forward pass of the network.
The subset used in a particular pass is determined by a router network; in the case of GPT-OSS,
this is a linear layer followed by a softmax operation.
We write that linear layer as $W_R x + b_R$, with
$W_R \in \mathbb{R}^{E \times d}$ and $b_R \in \mathbb{R}^E$.

For a given input embedding $x \in \mathbb{R}^d$ top-$k$ routing selects a subset of these experts
$\mathcal{S}(x) \subseteq \{1, \ldots, E\}$ with mixture weights
$\{\alpha_e(x)\}_{e \in \mathcal{S}(x)}$.
$\alpha_e(x)$ is the softmax over the $k$ largest router logits, with the non-selected experts
assigned weight zero. Each expert $e$ is a gated MLP with parameters
$\theta^e_{\text{mlp}} = \{W^e_{\text{gate}}, W^e_{\text{up}}, W^e_{\text{down}}\}$,
each with associated biases.
Combining the mixture weights and individual expert MLP, the whole MoE block therefore computes
$\textnormal{MoE}(x) = \sum_{e \in \mathcal{S}(x)} \alpha_e(x)\, \textnormal{MLP}(x; \theta^e_{\text{mlp}})$.
Stacking the per-expert weights along a leading expert axis yields tensors
$W^{\bullet}_{\text{gate}}, W^{\bullet}_{\text{up}}, W^{\bullet}_{\text{down}}$ (with $e$th slices
$W^e_{\text{gate}}, W^e_{\text{up}}, W^e_{\text{down}}$).

Let $P_\pi \in \mathbb{R}^{E \times E}$ be the permutation matrix of a permutation $\pi$ of
$\{1, \ldots, E\}$. We define the permuted router weights $W_R' := P_\pi W_R$ and $b_R' := P_\pi b_R$,
and permute $W^{\bullet}_{\text{gate}}, W^{\bullet}_{\text{up}}, W^{\bullet}_{\text{down}}$ and their
biases by $P_\pi$ along the leading (expert) axis. We write $\mathcal{S}'(x)$, $\alpha'_e(x)$ and
$\textnormal{MoE}'(x)$ for the selected set, mixture weights and output of the resulting permuted model.
Given this setup, we can show that this leaves the network unchanged.\\

\begin{lemma}[Expert permutation]
\label{lem:expert-perm}
With the parameters defined above, $\textnormal{MoE}'(x) = \textnormal{MoE}(x)$ for all $x$.
\end{lemma}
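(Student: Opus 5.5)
The plan is to track how relabelling experts by $\pi$ moves through each stage of the MoE computation: router logits, top-$k$ selection, mixture weights, and expert outputs. Using the convention that $P_\pi$ sends entry $\pi^{-1}(e)$ to position $e$ (equivalently, the new expert in slot $e$ is the old expert $\pi^{-1}(e)$; the other convention is handled symmetrically), write $\ell(x) = W_R x + b_R$. Then the permuted logits are
\[
\ell'(x) = W_R' x + b_R' = P_\pi (W_R x + b_R) = P_\pi \ell(x),
\]
so that $\ell'_e(x) = \ell_{\pi^{-1}(e)}(x)$. Likewise, stacking along the expert axis gives $\theta'^{e}_{\text{mlp}} = \theta^{\pi^{-1}(e)}_{\text{mlp}}$, biases included. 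Hence $\textnormal{MLP}(x;\theta'^{e}_{\text{mlp}}) = \textnormal{MLP}(x;\theta^{\pi^{-1}(e)}_{\text{mlp}})$ holds exactly, because the expert's parameters are unchanged and only their index differs.

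Next I will show that $\mathcal{S}'(x) = \pi(\mathcal{S}(x))$. Top-$k$ selection depends only on the multiset of logit values attached to labels. Relabelling the labels by $\pi$ maps the set of indices of the $k$ largest entries of $\ell$ to the corresponding set for $\ell'$.

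Tie-breaking is the one delicate point, and I expect it to be the main obstacle. If several logits are equal at the $k$-th threshold, an index-based tie-break (for example, lowest index wins) need not commute with $\pi$. I will handle this by stating the assumption that no ties occur at the selection boundary, or that ties are broken in a label-equivariant way. Under that assumption, the top-$k$ set is characterised purely by values, and equivariance follows. I will note that ties occur with measure zero in real-valued arithmetic, in keeping with the paper's treatment of exact arithmetic.

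For the mixture weights, $\alpha'_e(x)$ is the softmax of $\{\ell'_j(x)\}_{j\in\mathcal{S}'(x)}$ evaluated at $e$. The map $j \mapsto \pi^{-1}(j)$ bijects $\mathcal{S}'(x)$ onto $\mathcal{S}(x)$ and preserves logit values, so the softmax denominators agree. This gives $\alpha'_e(x) = \alpha_{\pi^{-1}(e)}(x)$.

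Finally, I will reindex the sum by the substitution $e = \pi(f)$:
\[
\textnormal{MoE}'(x) = \sum_{e\in\mathcal{S}'(x)} \alpha'_e(x)\,\textnormal{MLP}(x;\theta'^{e}_{\text{mlp}}) = \sum_{f\in\mathcal{S}(x)} \alpha_f(x)\,\textnormal{MLP}(x;\theta^{f}_{\text{mlp}}) = \textnormal{MoE}(x).
\]
Apart from the tie-breaking caveat, every step is a direct relabelling.
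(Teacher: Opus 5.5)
Your proposal is correct and follows the same route as the paper's proof: $P_\pi$ reorders the router logits, so top-$k$ selection gives $\mathcal{S}'(x)=\pi(\mathcal{S}(x))$ with $\alpha'_{\pi(e)}(x)=\alpha_e(x)$, and because the expert slices are relabelled by the same index, reindexing the weighted sum leaves it unchanged. Your caveat about ties at the top-$k$ boundary (you assume either no ties or label-equivariant tie-breaking) is an assumption the paper's one-line proof makes silently; it restricts the literal ``for all $x$'' to tie-free inputs, so your version is more careful rather than a different argument.
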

\begin{proof}
The permuted router maps the logits to $W_R' x + b_R' = P_\pi(W_R x + b_R)$, a reordering by
$\pi$, so top-$k$ selection and the softmax give $\mathcal{S}'(x) = \pi(\mathcal{S}(x))$ and
$\alpha'_{\pi(e)}(x) = \alpha_e(x)$. Since the expert tensors are addressed by the same index,
the selected experts compute the same $\textnormal{MLP}(x; \theta^e_{\text{mlp}})$ as before, so
the weighted sum $\textnormal{MoE}'(x)$ is preserved.
\end{proof}

\subsection{KV Heads}

Let $H$ denote the number of query heads, $H_{kv}$ the number of key-value heads,
$r = H / H_{kv}$, and $d_h$ the head dimension.
Query heads are partitioned into $H_{kv}$ groups of $r$; within a group $g$ all queries share
the key and value of head $g$. We view
\begin{align*}
W_Q &\in \mathbb{R}^{H_{kv} \times r d_h \times d}, \\
W_K, W_V &\in \mathbb{R}^{H_{kv} \times d_h \times d}, \\
W_O &\in \mathbb{R}^{d \times H_{kv} \times r d_h},
\end{align*}
and associated per-head sinks
$s \in \mathbb{R}^{H_{kv} \times r}$. For a group $g \in \{1, \ldots, H_{kv}\}$, we write
$W_{Q,g} \in \mathbb{R}^{r d_h \times d}$, $W_{O,g} \in \mathbb{R}^{d \times r d_h}$ and
$s_g \in \mathbb{R}^{r}$ for the corresponding blocks.

For full neutralization, we apply two permutations to these heads: an outer permutation
$\rho$ of the $H_{kv}$ groups and, within each group $g$, a permutation $\tau_g$ of its $r$
query heads, with permutation matrices $P_\rho \in \mathbb{R}^{H_{kv} \times H_{kv}}$ and
$P_{\tau_g} \in \mathbb{R}^{r \times r}$. Primed weights denote the result of applying these
matrices to the relevant axes, with $W_O$ receiving the matching inverse permutations on its
input axis so the relabeling is undone at the output. This is formalized in
Lemma~\ref{lem:kv-head-perm}.\\

\begin{lemma}[KV-head permutation]
\label{lem:kv-head-perm}
Permuting the key-value group index of $W_Q, W_K, W_V, W_O, s$ and the biases $b_Q, b_K, b_V$ by
$P_\rho$ leaves attention unchanged. Independently, for each $g \in \{1, \ldots, H_{kv}\}$,
permuting the within-group head index of $W_{Q,g}, W_{O,g}, s_g$ (and $b_{Q,g}$ if present) by
$P_{\tau_g}$ leaves attention unchanged.
\end{lemma}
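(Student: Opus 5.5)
The plan is to write grouped-query attention with sinks explicitly as a sum over query heads, and show that both permutations are just relabellings of the summation index. For query head $(g,j)$ with $g\in\{1,\dots,H_{kv}\}$ and $j\in\{1,\dots,r\}$, the computation uses only a fixed set of components. These are the query block $W_{Q,g,j}$ (and $b_{Q,g,j}$), the key and value blocks $W_{K,g},W_{V,g}$ (and $b_{K,g},b_{V,g}$), the sink $s_{g,j}$, and the output slice $W_{O,g,j}$. The head output is
\[
\textnormal{head}_{g,j}(X) = \textnormal{SoftmaxSink}\!\left(\tfrac{(XW_{Q,g,j}^\intercal + b_{Q,g,j})\,\mathrm{RoPE}(XW_{K,g}^\intercal+b_{K,g})^\intercal}{\sqrt{d_h}};\, s_{g,j}\right)(XW_{V,g}^\intercal + b_{V,g}),
\]
where RoPE acts only on the sequence position and the $d_h$ coordinates. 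The first step is to observe that concatenation followed by $W_O$ is a sum over heads:
\[
\textnormal{MHA}(X) = \sum_{g=1}^{H_{kv}}\sum_{j=1}^{r} \textnormal{head}_{g,j}(X)\, W_{O,g,j}^\intercal \;+\; b_O .
\]
The output bias $b_O$ lives in the embedding space and is untouched by both permutations.

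For the outer permutation $\rho$, note what the primed tensors satisfy after permuting the group axis by $P_\rho$. The primed query block at index $\rho(g)$ is the original block $g$, and likewise for $W_K, W_V, W_O, s$ and the biases. So $\textnormal{head}'_{\rho(g),j}$ is built from exactly the same components as $\textnormal{head}_{g,j}$, and the two are equal. Its output slice $W'_{O,\rho(g),j}$ also equals $W_{O,g,j}$. Substituting $g\mapsto\rho(g)$ in the double sum, which is a bijection of the index set, gives $\textnormal{MHA}'=\textnormal{MHA}$.

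For the inner permutations $\tau_g$, the argument is the same but within each group. Here only the query-head index $j$ is relabelled, and the shared keys and values of group $g$ are unchanged. Consequently $\textnormal{head}'_{g,\tau_g(j)}=\textnormal{head}_{g,j}$ and $W'_{O,g,\tau_g(j)}=W_{O,g,j}$, and reindexing the inner sum for each $g$ finishes the argument. Because the two claims are independent, their composition is also invariant.

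The main obstacle is bookkeeping rather than mathematics. Each permuted tensor must be permuted along the correct axis, and that axis must carry the \emph{same} index as the corresponding head. In particular, $W_O$'s input axis is reshaped as $(H_{kv}, r, d_h)$ and must be permuted by $P_\rho$ (respectively $P_{\tau_g}$) on the head axes, not on $d_h$ or $d$. The sink softmax must also be checked to depend on its head only through $s_{g,j}$. I would handle this by fixing the reshaped index conventions from the statement, stating as a small observation that ``permuting along the head axis'' means $A'_{\rho(g)}=A_g$, and then applying the reindexing argument uniformly.
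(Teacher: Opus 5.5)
Your proposal is correct and uses the same relabelling argument as the paper. Each query head $(g,j)$ depends only on its own query block, bias and sink, on its group's shared key/value blocks, and on its own slice of $W_O$. Permuting the group index (or the within-group index) consistently across these tensors, with $W_O$'s input axis relabelled the same way, therefore just reindexes the heads.

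Writing $\textnormal{Concat}(\cdot)\,W_O$ as $\sum_{g,j}\textnormal{head}_{g,j}(X)\,W_{O,g,j}^\intercal + b_O$ turns the paper's ``matching inverse on $W_O$ undoes the relabelling'' into an explicit change of summation index, which is a cleaner rendering of the same idea. One small slip: in your head formula RoPE should be applied to the query term as well as the key, but this is harmless since it acts identically on every head.
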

\begin{proof}
Applying $P_\rho$ to the group axis relabels KV groups consistently across
$W_Q, W_K, W_V, W_O, s$, so that each group of $r$ query heads remains paired with the same key
and value, and the matching inverse on the group index of $W_O$ undoes the relabeling at the
output. The inner permutation $P_{\tau_g}$ acts only on the $r$ query heads sharing the
key/value of group $g$; the matching permutation of $W_{O,g}$ inverts it at the output.
\end{proof}

RoPE is intra-head, so it is unaffected by the permutations above.\\

\begin{theorem}[Complete neutralization for GPT-OSS]
\label{thm:gpt-oss-complete}
Let $\Theta$ be the parameters of a GPT-OSS model. Drawing independently
\begin{enumerate}
  \item a permutation $P$ of $\{1, \ldots, d\}$ applied to the hidden dimension per
  Theorem~\ref{thm:transformer-invariance}, with the gated-MLP step of that theorem
  applied per expert to each $\theta^{l, e}_{\text{mlp}}$;
  \item per layer $l$, a permutation $\pi^l$ of $\{1, \ldots, E\}$ applied as in
  Lemma~\ref{lem:expert-perm};
  \item per layer $l$, a permutation $\rho^l$ of $\{1, \ldots, H_{kv}\}$ and permutations
  $\{\tau^l_g\}_{g=1}^{H_{kv}}$ of $\{1, \ldots, r\}$ applied as in
  Lemma~\ref{lem:kv-head-perm};
  \item per (layer, expert) $(l, e)$, a permutation $Q^{l, e}$ of
  $\{1, \ldots, d_{\text{ff}}\}$ applied to the intermediate dimension of expert
  $\theta^{l, e}_{\text{mlp}}$ by Proposition~\ref{prop:gatedmlp-inner};
\end{enumerate}
yields parameters $\Theta'$ that compute the same function as $\Theta$, and every
parameter tensor of $\Theta$ is moved by at least one of these permutations.
\end{theorem}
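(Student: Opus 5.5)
The proof has two parts: functional equivalence and coverage. The equivalence follows by composing the four families of permutations one at a time. Each family is already known to preserve the function of the block it acts on: Theorem~\ref{thm:transformer-invariance} for the hidden dimension, Lemma~\ref{lem:expert-perm} for expert order, Lemma~\ref{lem:kv-head-perm} for KV groups and within-group heads, and Proposition~\ref{prop:gatedmlp-inner} for expert intermediate dimensions. If each step maps a parameter set computing $f$ to one computing $f$, the composition also computes $f$. So the main work is checking that each step is still valid once the others have been applied, and that each result carries over to the GPT-OSS blocks.

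\textbf{Hidden permutation $P$.} I would rerun the induction in the proof of Theorem~\ref{thm:transformer-invariance}, replacing the dense MLP with the MoE block. Three facts are needed.
\begin{enumerate}
\item The router logits are unchanged. If $W_R' := W_R P$ in the row-vector convention, then $xP\,P^\intercal W_R^\intercal = xW_R^\intercal$, and the bias $b_R$ is untouched.
\item Each expert satisfies Proposition~\ref{prop:gatedmlp}. Its output bias $b_{\text{down}}^e$ is permuted by $P$, while the gate and up biases live in the inner dimension and are left fixed.
\item Equivariance is preserved under the convex combination. From the first two facts, $\mathrm{MoE}'(xP)=\sum_e \alpha_e(x)\,\mathrm{MLP}(x;\theta^e)P=\mathrm{MoE}(x)P$.
\end{enumerate}
For attention, Proposition~\ref{prop:mha} still holds with the $q,k,v$ biases and the sinks, because they are internal and enter only after $XP\,P^\intercal W$ has cancelled. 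The output bias $b_O$ is permuted by $P$, so $\mathrm{MHA}'(XP)=\mathrm{MHA}(X)P$. RoPE is intra-head, as the paper notes.

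\textbf{Commutation of the other permutations.} Steps 2--4 act on axes disjoint from the hidden axis: the expert index, the head and group indices, and the intermediate index. Within a block they also act on distinct axes. The expert permutation $\pi^l$ relabels which slice is which expert. The inner permutation $Q^{l,e}$ acts inside a slice, so applying $Q^{l,e}$ indexed by the original expert label and then moving the slice by $\pi^l$ gives a well-defined result. Every permutation matrix in the construction therefore multiplies a different tensor axis. Consequently the order in which they are applied does not matter, and each lemma's hypothesis holds for the intermediate parameters. Chaining the invariances then gives $f_{\Theta'}=f_\Theta$.

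\textbf{Coverage.} This is a bookkeeping step and the place where an omission is most likely, so I would enumerate the GPT-OSS tensors explicitly.
\begin{itemize}
\item $W_E$, $W_U$, all $\gamma$, $W_Q$, $W_K$, $W_V$, $W_O$, $b_O$, $W_R$, and the expert down weights and biases all have a $d$-axis, so $P$ moves them.
\item The router bias $b_R$ is moved by $\pi^l$.
\item The biases $b_Q$, $b_K$, $b_V$ and the sinks $s$ are moved by $\rho^l$ and $\tau_g^l$.
\item The gate and up biases are moved by $Q^{l,e}$ and $\pi^l$.
\end{itemize}
If the implementation fuses gate and up into a single tensor, the interleaved columns must be permuted consistently; Proposition~\ref{prop:gatedmlp-inner} still applies, since it only needs the same $Q$ on both halves. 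Finally, I would argue that every tensor is actually moved, rather than only having a permutation applied. Each of the permutations above is drawn as a derangement, as in Section~\ref{sec:countermeasures}, so no index along the permuted axis stays in place. Under the paper's standing assumption that rows and columns are distinct, every tensor therefore changes.
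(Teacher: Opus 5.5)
Your proposal is correct and follows the paper's route: the composite is function-preserving because each permutation acts on its own tensor axis, so the separately established invariances compose, and coverage is shown by enumerating the GPT-OSS tensors. You are in fact more careful than the paper in re-deriving the $P$-step for the router, the attention and expert biases, and the MoE sum, which the paper simply attributes to the cited results.

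Two small remarks. First, your coverage list skips the expert gate and up weight matrices; these are moved by $P$, $\pi^l$ and $Q^{l,e}$. Second, the closing derangement argument is not needed. The theorem, like the paper's proof, only asserts that each tensor is acted on by some permutation, and your stronger \emph{actually changes} claim rests on hypotheses (derangements, distinct slices) that the statement does not make.
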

\begin{proof}
We have that each permutation works on a different axis of any tensors that they have in common:
$P$ on the hidden/residual stream axis, $\pi^l$ on the expert
axis, $\rho^l$ and $\tau^l_g$ on the head axes, $Q^{l, e}$ on the intermediate dimension axis. The
cited results establish invariance on each axis independently, so their composition is
invariant.

For coverage:
\begin{itemize}
  \item token and output embeddings, both per-layer normalization gains and the final
  normalization gain are moved by $P$;
  \item $W_Q, W_K, W_V, W_O$ are moved by $P$ and $\rho^l$; the biases $b_Q, b_K, b_V$ by
  $\rho^l$ and $b_O$ by $P$;
  \item the sinks $s$ are moved by $\rho^l$ and $\tau^l_g$;
  \item $W_R$ is moved by $P$ and $\pi^l$, and $b_R$ by $\pi^l$;
  \item each $W^{l, e}_{\text{gate}}$, $W^{l, e}_{\text{up}}$ is moved by $P$, $\pi^l$ and
  $Q^{l, e}$, and their biases by $\pi^l$ and $Q^{l, e}$;
  \item each $W^{l, e}_{\text{down}}$ is moved by $P$, $\pi^l$ and $Q^{l, e}$, and its bias by
  $P$ and $\pi^l$.
\end{itemize}
\end{proof}

\section{Additional Per-Model Results}
\label{app:additional-results}

This appendix collects the per-model results for Codestral-22B and
Meta-Llama-3-8B-Instruct referenced in Section~\ref{sec:perf-degradation}.
Tables~\ref{tab:perm-trials-codestral} and~\ref{tab:perm-trials-llama3} report the
per-component permutation divergence under the protocol described there, while
Tables~\ref{tab:quant-llama3-8b} and~\ref{tab:quant-codestral-22b} report the
corresponding quantization-fidelity baselines.

\begin{table}[ht]
\centering
\small
\begin{tabular}{l cc cc cc}
\toprule
 & \multicolumn{2}{c}{\texttt{embeddings}} & \multicolumn{2}{c}{\texttt{mlps}} & \multicolumn{2}{c}{\texttt{both}} \\
\cmidrule(lr){2-3} \cmidrule(lr){4-5} \cmidrule(lr){6-7}
 & mean & worst & mean & worst & mean & worst \\
\midrule
KL divergence ($\times 10^{-4}$)~$\downarrow$             & $4.216$  & $4.450$  & $4.084$  & $4.140$  & $4.400$  & $4.530$  \\
Top-$1$ overlap (\%)~$\uparrow$                           & $98.77$  & $98.67$  & $98.72$  & $98.67$  & $98.75$  & $98.71$  \\
Top-$5$ overlap (\%)~$\uparrow$                           & $93.64$  & $93.40$  & $93.70$  & $93.60$  & $93.57$  & $93.25$  \\
Top-$10$ overlap (\%)~$\uparrow$                          & $89.16$  & $88.87$  & $89.31$  & $89.17$  & $88.95$  & $88.87$  \\
Top-$100$ Jaccard (\%)~$\uparrow$                         & $98.15$  & $98.13$  & $98.16$  & $98.15$  & $98.12$  & $98.09$  \\
Top-$1000$ Jaccard (\%)~$\uparrow$                        & $98.53$  & $98.52$  & $98.55$  & $98.55$  & $98.50$  & $98.49$  \\
$\Delta_{\max}$ (logits)~$\downarrow$                     & $2.400$  & $2.844$  & $2.373$  & $2.750$  & $2.609$  & $3.781$  \\
\bottomrule
\end{tabular}
\caption{Per-component permutation divergence for Codestral-22B, under the protocol described
in the text. Arrows indicate the favorable direction for each statistic:
$\uparrow$ ($\downarrow$) means higher (lower) is better.}
\label{tab:perm-trials-codestral}
\end{table}

\begin{table}[ht]
\centering
\small
\begin{tabular}{l cc cc cc}
\toprule
 & \multicolumn{2}{c}{\texttt{embeddings}} & \multicolumn{2}{c}{\texttt{mlps}} & \multicolumn{2}{c}{\texttt{both}} \\
\cmidrule(lr){2-3} \cmidrule(lr){4-5} \cmidrule(lr){6-7}
 & mean & worst & mean & worst & mean & worst \\
\midrule
KL divergence ($\times 10^{-3}$)~$\downarrow$ & $0.774$ & $0.789$ & $0.773$ & $0.809$ & $0.846$ & $0.872$ \\
Top-$1$ overlap (\%)~$\uparrow$ & $98.62$ & $98.50$ & $98.72$ & $98.64$ & $98.57$ & $98.49$ \\
Top-$5$ overlap (\%)~$\uparrow$ & $92.23$ & $92.02$ & $92.53$ & $92.40$ & $92.05$ & $91.78$ \\
Top-$10$ overlap (\%)~$\uparrow$ & $86.03$ & $85.75$ & $86.35$ & $86.14$ & $85.81$ & $85.51$ \\
Top-$100$ Jaccard (\%)~$\uparrow$ & $97.54$ & $97.52$ & $97.61$ & $97.60$ & $97.49$ & $97.48$ \\
Top-$1000$ Jaccard (\%)~$\uparrow$ & $97.95$ & $97.95$ & $98.01$ & $97.99$ & $97.90$ & $97.90$ \\
$\Delta_{\max}$ (logits)~$\downarrow$ & $1.916$ & $2.938$ & $2.191$ & $2.562$ & $2.272$ & $3.031$ \\
\bottomrule
\end{tabular}
\caption{Per-component permutation divergence for Meta-Llama-3-8B-Instruct, under the protocol
described in the text. Arrows indicate the favorable direction for each statistic:
$\uparrow$ ($\downarrow$) means higher (lower) is better.}
\label{tab:perm-trials-llama3}
\end{table}

\begin{table}[ht]
\centering
\small
\begin{tabular}{l c c c c}
\toprule
 & \texttt{Q8\_0} & \texttt{Q6\_K} & \texttt{int8} & \texttt{permute} \\
\midrule
KL divergence ($\times 10^{-3}$)~$\downarrow$ & $1.802$ & $16.194$ & $25.431$ & $0.872$ \\
Top-$1$ overlap (\%)~$\uparrow$ & $97.58$ & $95.13$ & $91.40$ & $98.49$ \\
Top-$5$ overlap (\%)~$\uparrow$ & $88.50$ & $77.78$ & $61.49$ & $91.78$ \\
Top-$10$ overlap (\%)~$\uparrow$ & $78.89$ & $61.63$ & $38.50$ & $85.51$ \\
Top-$100$ Jaccard (\%)~$\uparrow$ & $96.22$ & $91.46$ & $85.64$ & $97.48$ \\
Top-$1000$ Jaccard (\%)~$\uparrow$ & $96.60$ & $91.52$ & $86.11$ & $97.90$ \\
$\Delta_{\max}$ (logits)~$\downarrow$ & $3.438$ & $7.766$ & $7.438$ & $3.031$ \\
\bottomrule
\end{tabular}
\caption{Quantization fidelity for \texttt{meta-llama/Meta-Llama-3-8B-Instruct} vs the bf16 reference. $\uparrow$ ($\downarrow$) means higher (lower) is better. The \texttt{permute} column reports the worst of five random \texttt{both}-regime permutations (Table~\ref{tab:perm-trials-llama3}).}
\label{tab:quant-llama3-8b}
\end{table}

\begin{table}[ht]
\centering
\small
\begin{tabular}{l c c c c}
\toprule
 & \texttt{Q8\_0} & \texttt{Q6\_K} & \texttt{int8} & \texttt{permute} \\
\midrule
KL divergence ($\times 10^{-3}$)~$\downarrow$ & $1.021$ & $8.557$ & $5.463$ & $0.453$ \\
Top-$1$ overlap (\%)~$\uparrow$ & $98.33$ & $96.20$ & $96.02$ & $98.71$ \\
Top-$5$ overlap (\%)~$\uparrow$ & $90.67$ & $83.98$ & $79.91$ & $93.25$ \\
Top-$10$ overlap (\%)~$\uparrow$ & $85.33$ & $73.23$ & $65.62$ & $88.87$ \\
Top-$100$ Jaccard (\%)~$\uparrow$ & $97.41$ & $94.65$ & $93.51$ & $98.09$ \\
Top-$1000$ Jaccard (\%)~$\uparrow$ & $97.87$ & $95.36$ & $94.48$ & $98.49$ \\
$\Delta_{\max}$ (logits)~$\downarrow$ & $2.562$ & $5.062$ & $9.016$ & $3.781$ \\
\bottomrule
\end{tabular}
\caption{Quantization fidelity for \texttt{mistralai/Codestral-22B-v0.1} vs the bf16 reference. $\uparrow$ ($\downarrow$) means higher (lower) is better. The \texttt{permute} column reports the worst of five random \texttt{both}-regime permutations (Table~\ref{tab:perm-trials-codestral}).}
\label{tab:quant-codestral-22b}
\end{table}

\clearpage
\section{Model Repositories}
\label{app:model-repos}

The reference (full-precision) checkpoints used throughout our experiments are the
following Hugging Face repositories:
\begin{itemize}
  \item \texttt{TinyLlama/TinyLlama-1.1B-Chat-v1.0}
  \item \texttt{mistralai/Mistral-7B-Instruct-v0.3}
  \item \texttt{meta-llama/Meta-Llama-3-8B-Instruct}
  \item \texttt{openai/gpt-oss-20b}
  \item \texttt{mistralai/Codestral-22B-v0.1}
\end{itemize}

The quantized candidates are the corresponding community GGUF repositories below (the
\texttt{int8} baselines are produced at runtime from the reference checkpoints via
LLM.int8() and have no separate repository):
\begin{itemize}
  \item \texttt{TheBloke/TinyLlama-1.1B-Chat-v1.0-GGUF}
  \item \texttt{MaziyarPanahi/Mistral-7B-Instruct-v0.3-GGUF}
  \item \texttt{bartowski/Meta-Llama-3-8B-Instruct-GGUF}
  \item \texttt{bartowski/Codestral-22B-v0.1-GGUF}
\end{itemize}

\end{document}